\newtheorem{theorem}{Theorem}[section]
\newtheorem{corollary}[theorem]{Corollary}
\newtheorem{lemma}[theorem]{Lemma}
\newtheorem{observation}[theorem]{Observation}
\title{Geodesic farthest-point Voronoi diagram in linear time}
\author{Luis Barba\thanks{Department of Computer Science, ETH Z\"urich, Switzerland, \texttt{luis.barba@inf.ethz.ch}}}
\date{} 
\newcommand{\icell}[1][i]{${#1}$-patch\xspace}
\newcommand{\icells}[1][i]{${#1}$-patches\xspace}
\newcommand{\idom}[1][i]{${#1}$-dominated\xspace}
\newcommand{\F}[2][P]{\ensuremath{F_{\scriptscriptstyle #1}(#2)}}
\newcommand{\s}{\mathcal S}
\newcommand{\g}[3][P]{\ensuremath{\textsc{g}^{\scriptscriptstyle #1}(#2, #3)}}
\newcommand{\dd}[3][P]{\ensuremath{\textsc{d}_{\scriptscriptstyle w}^{\scriptscriptstyle #1}(#2 {\leadsto} #3)}}
\newcommand{\ddw}[3][P]{\ensuremath{\textsc{d}_{\scriptscriptstyle w'}^{\scriptscriptstyle #1}(#2 {\leadsto} #3)}}
\newcommand{\p}[3][P]{\ensuremath{\pi_{_{#1}}(#2, #3)}}
\newcommand{\f}[2][P]{\ensuremath{f_{\scriptscriptstyle #1}(#2)}}
\newcommand{\funnel}[2][P]{\ensuremath{\mathtt{Funnel}_{\scriptscriptstyle #1}(#2)}}
\newcommand{\cell}[2][P]{\ensuremath{\mathtt{Cell}_{\scriptscriptstyle #1}(#2)}}
\newcommand{\bcell}[2][P]{\ensuremath{\mathtt{bCell}_{\scriptscriptstyle #1}(#2)}}
\newcommand{\interior}[1]{\mathrm{int}(#1)}
\newcommand{\vd}[2][P]{\textsc{vd}(#2, #1)}
\newcommand{\cost}[1]{\kappa(#1)}
\newcommand{\bvd}[2][P]{\textsc{vd}_{\partial}(#2, #1)}
\newcommand{\ex}[1]{\textsc{E}\left[#1\right]}
\newcommand{\exw}[1]{\textsc{E}#1}
\newcommand{\LL}[1][\s, P]{\ensuremath{\mathcal L_{_{#1}}}}
\newcommand{\A}{\ensuremath{\mathcal A}}
\begin{document}

\maketitle

\begin{abstract}
Let $P$ be a simple polygon with $n$ vertices.
For any two points in $P$, the geodesic distance between them is the length of the shortest path that connects them among all paths contained in $P$. 
Given a set $\s$ of $m$ sites being a subset of  the vertices of $P$, we present a randomized algorithm to compute the geodesic farthest-point Voronoi diagram of $\s$ in $P$ running in expected $O(n + m)$ time. 
That is, a partition of $P$ into cells, at most one cell per site, such that every point in a cell has the same farthest site with respect to the geodesic distance. 
In particular, this algorithm can be extended to run in expected $O(n + m\log m)$ time when $\s$ is an arbitrary set of $m$ sites contained in $P$, thereby solving the open problem posed by Mitchell in Chapter 27 of the Handbook of Computational Geometry.
\end{abstract}

\section{Introduction}
Let $P$ be a simple $n$-gon.
Let $\s$ be a set of $m\geq 3$ weighted \emph{sites} (points) contained in $V(P)$, where $V(P)$ denotes the set of vertices of $P$. 
That is, we have a function $w:\s\to \mathbb{R}$ that assigns to each site of $\s$ a non-negative weight. 
We also extend the weight function to any point in $P$ by setting $w(x) = 0$ for all $x\in P\setminus \s$.
While we could allow the sites to lie anywhere on the boundary, $\partial P$, of $P$, as long as we know their clockwise order along $\partial P$, we can split the edges of $P$ at the sites, and produce a new polygon where each site coincides with a vertex. Therefore, we assume that $\s\subseteq V(P)$.

Given two points $x,y$ in $P$ (either on the boundary or in the interior), the \emph{geodesic path} $\p{x}{y}$ is the shortest path contained in $P$ connecting $x$ with $y$. If the straight-line segment connecting $x$ with $y$ is contained in $P$, then $\p{x}{y}$ is the straight-line segment~$xy$. 
Otherwise, $\p{x}{y}$ is a polygonal chain whose vertices (other than its endpoints) are  reflex vertices of $P$. 
We refer the reader to~\cite{m-gspno-00} for more information on geodesic paths.

For a segment $xy$, we denote its Euclidean length by $|xy|$. For a path, its \emph{Euclidean length} is the sum of the Euclidean length of all of its segments. Given two points $x$ and $y$ in $P$, their \emph{geodesic distance} $\g{x}{y}$ is the Euclidean length of $\p{x}{y}$.
The \emph{weighted geodesic distance} (or simply \emph{$w$-distance}) between two points $x$ and $y$ in $P$, denoted by $\dd{x}{y}$, is the sum of $w(x)$ with the Euclidean length of $\p{x}{y}$, i.e., $\dd{x}{y} = w(x) + \g{x}{y}$. Notice that if all weights are set to zero, then the $w$-distance coincides with the classical definition of geodesic distance~\cite{m-gspno-00}. 
Moreover, notice that this distance is not symmetric unless the weights of $x$ and $y$ coincide. 

Given a point $x\in P$, an \emph{$\s$-farthest site} of $x$ in $P$ is a site  $s$ of  $\s$ whose $w$-distance to $x$ is maximized.
To ease the description, we assume that each vertex of $P$ has a unique $\s$-farthest neighbor. 
This \emph{general position} condition was also assumed in~\cite{ahn2015linear,aronov1993furthest,oh2016farthest} and can be obtained by applying a slight perturbation~\cite{edelsbrunner1990simulation}.

For a site $s\in \s$, let $\cell{s,\s} = \{x\in P: \dd{s}{x} \geq \dd{s'}{x}, \forall s'\in \s\}$ be the (weighted farthest) \emph{Voronoi cell} of $s$ (in $P$ with respect to $\s$). 
That is, $\cell{s,\s}$ consists of all the points of $P$ that have $s$ as one of their $\s$-farthest sites.
The union of all Voronoi cells covers the entire polygon $P$, and the closure of the set $\interior{P} \setminus \cup_{s \in \s} \interior{\cell{s,\s}}$ defines the (weighted farthest) \emph{Voronoi graph} of $\s$ in $P$. 

The Voronoi graph together with the set of Voronoi cells defines the \emph{weighted geodesic farthest-point
Voronoi diagram} (or simply \emph{FVD}) of $\s$ in $P$, denoted by $\vd{\s}$.
Thus, we indistinctively refer to $\vd{\s}$ as a graph or as a set of Voronoi cells; see Figure~\ref{fig:WeightsNotHarder}.

\begin{figure}[t]
\centering
\includegraphics{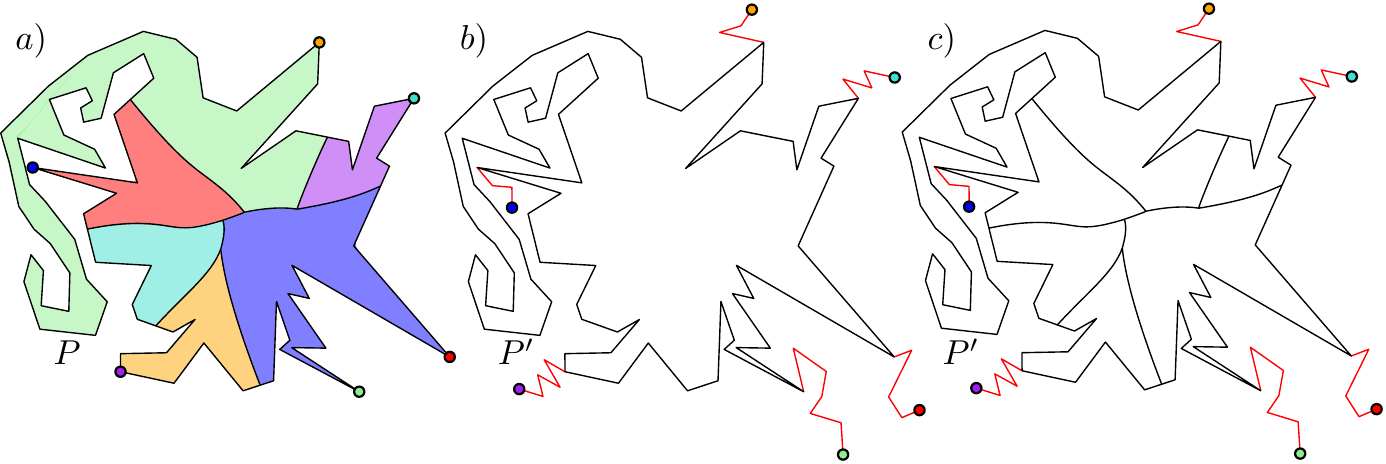}
%[width=1\textwidth]
\caption{$a)$ A simple polygon $P$ with a set $\s$ of six weighted sites and their FVD. $b)$ A new polygon $P'$ where a path of length $w(s)$ is added at the location of each site $s\in \s$. Then the weight is set to zero and moved at the endpoint of its corresponding path.
$c)$ The FVD of $\s$ in $P$ coincides with the FVD of the new sites in the new polygon $P'$.}
\label{fig:WeightsNotHarder}
\end{figure}

Notice that having non-zero weights on our set of sites does not make the problem harder. 
To see this, consider a new polygon $P'$, where at the location of each site $s\in \s$, a path of length $w(s)$ is attached to the boundary of $P$. 
Additionally, the site $s$ is given weight zero and is moved to the other endpoint of this path; see Figure~\ref{fig:WeightsNotHarder} for an illustration. 
In this way we obtain a new weakly simple polygon $P'$ and a new set of sites $\s'$ that defines the same FVD as $\s$ in $P$. 
Therefore, a weighted FVD as described in this paper has the same properties as the classical farthest-point Voronoi diagram constructed using the geodesic distance~\cite{aronov1993furthest}. 
In particular, we know that the Voronoi graph is a tree with leaves on the boundary of $P$.
Also, each edge of this graph consists of a sequence of straight-lines and hyperbolic arcs that may intersect $\partial P$ only at its endpoints~\cite{aronov1993furthest}. 
Thus, we refer to the Voronoi graph as a \emph{Voronoi tree} from now on.
While working with weighted sites might seem an unnecessary complication, we decided to work with them to ease the description of the recursive construction that our algorithm uses. 

Let $\F{x, \s}$ be the function that maps each $x\in P$ to the $w$-distance to a $\s$-farthest neighbor of $x$ (i.e., $\F{x, \s} = \dd{x}{\f{x, \s}}$).
Notice that $\F{x, \s}$ can be seen as the upper envelope of the $w$-distance functions from the sites in $\s$.
Throughout the paper, we will play with this alternative way of thinking of Voronoi diagrams as graphs or upper envelopes.
A point $c\in P$ that minimizes $\F{x, \s}$ is called the \emph{geodesic center} of $P$. 
Similarly, a point $s\in P$ that maximizes $\F{x, \s}$ (together with $\f{s}$) forms a \emph{diametral pair} and their $w$-distance is the \emph{geodesic diameter}.

\textbf{Related work.} 
The problem of computing the geodesic center of simple $n$-gon $P$ (and its counterpart, the geodesic diameter) were central in the 80's in the computational geometry community. 
Chazelle~\cite{c-tpca-82} provided the first $O(n^2)$-time algorithm to compute the geodesic diameter.
Suri~\cite{suri1989computing} improved upon it by reducing the running time to $O(n\log n)$. 
Finally, Hershberger and Suri~\cite{hershberger1993matrix} introduced a matrix search technique that allowed them to obtain a linear-time algorithm for computing the diameter.

The first algorithm for computing the geodesic center of $P$ was given by Asano and Toussaint~\cite{at-cgcsp-85}, and runs in $O(n^4\log n)$ time.  
This algorithm computes a super set of the vertices of the Voronoi tree of $\vd{\s}$, where $\s$ is the set of vertices of $P$.  
Shortly after, Pollack et al.~\cite{pollackComputingCenter} improved the running time to $O(n\log n)$.  
This remained the best running time for many years until recently when Ahn et al.~\cite{ahn2015linear} settled the complexity of this problem by presenting a $\Theta(n)$-time algorithm to compute the geodesic center of $P$. 

The problem of computing the FVD generalizes the problems of computing the geodesic center and the geodesic diameter.  
For a set $\s$ of $m\geq 3$ sites in a simple $n$-gon $P$, Aronov~\cite{aronov1993furthest} presented an algorithm to compute $\vd{\s}$ in $O((n+m)\log (n+ m))$ time.  
While the best known lower bound is $\Omega(n + m \log m)$, it was not known whether or not the dependence on $n$, the complexity of $P$, is linear in the running time.  
In fact, this problem was explicitly posed by Mitchell~\cite[Chapter 27]{m-gspno-00} in the Handbook of Computational Geometry, and solving it has become a prominent area of research in recent years.
Oh et al.~\cite{oh2016farthest} (SoCG'16) present the first improvement to this problem in more than 20 years. 
Using the new tools presented by Ahn et a.~\cite{ahn2015linear}, they introduce an $O(n \log\log n+m\log m)$-time algorithm to compute $\vd{\s}$. 
As a stepping stone, they present an $O((n+m)\log\log n)$-time algorithm for the simpler case where all sites are vertices of $P$.
In fact, any improvement on the latter algorithm translates directly to an improvement on the general problem.
In particular, a linear time algorithm for the simpler case with sites on the boundary of $P$ suffices to match the lower bound and close the problem presented by Mitchel~\cite[Chapter 27]{m-gspno-00}.

Recently, not only farthest-point Voronoi diagrams have received attention. 
For the nearest-point geodesic Voronoi diagram, two papers have focused in finding algorithms matching the same lower bound of $\Omega(n + m\log m)$~\cite{chihungVoronoi,oh2017voronoi}.
While the best results still work only for a limited range of $m$ with respect to $n$, both papers have appeared in consecutive years in the Symposium on Computational Geometry (SoCG).
However, the techniques in these papers use data structures that require logarithmic factors, and hence is not conceivable to transform them to obtain a linear algorithm for the case when the sites are vertices of the polygon. 

\textbf{Our results.} 
In this paper, we provide an optimal, albeit randomized, algorithm to compute $\vd{\s}$ for the special case where all sites of $\s$ are vertices of $P$.
Our algorithm runs in expected $\Theta(n + m)$ time. 
Using the reduction presented by Oh et al.~\cite{oh2016farthest}, we immediately obtain an algorithm for the general case where the sites can be arbitrary points in $P$.
This algorithm matches the lower bound and runs in expected $\Theta(n + m\log m)$ time thereby solving the problem posed by Mitchel~\cite[Chapter 27]{m-gspno-00}.
It remains open to find a deterministic algorithm with the same running.

\textbf{Our approach.}
Let  $P$ be a simple $n$-gon and let $\s$ be a set of $m\geq 3$ sites contained in $V(P)$, where $V(P)$ is the set of vertices of~$P$.
We present a randomized $O(n+m)$-time algorithm to compute the FVD of $\s$ in $P$.
We would like to use a variation of the randomized incremental construction (\emph{RIC}) for Euclidean farthest-point Voronoi diagrams~\cite{de2000computational}.
This algorithm inserts the sites, one by one, in random order and construct the cell of each newly inserted site in time proportional to its size. 
By bounding the expected size of each cell using backwards analysis, the incremental construction can be carried out in total linear time. 

In the geodesic case however, the complexity of a cell depends not only on the set of sites, but also on the complexity of the polygon~\cite{aronov1989geodesic}. 
Already the FVD of 3 sites can have $\Omega(n)$ vertices and arcs.
Moreover, there is an additional complication when using $w$-distances.
To achieve an incremental construction, one would need to have at hand a complete description of the $w$-distance function $\dd{s}{x}$ inside of the newly created cell for the inserted site $s$.
If this function is precomputed in the entire polygon, this would be too costly. 
Thus, one needs to define these functions only at the specific locations where they are needed. 
An additional problem is that for a RIC, the first inserted sites must have their $w$-distance defined in almost the entire polygon.
Thus, already the description-size of the $w$-distances needed for the first batch of sites (say the the first $m/100$) becomes super linear. 
Therefore, it seems hopeless to try a RIC without somehow reducing the complexity of $P$ throughout the process. 
Nevertheless, a RIC works great for all the sites that come after this first batch. 
Intuitively,  the latter insertions define smaller cells, and the space needed to describe their $w$-distances can be nicely bounded.
Thus, the main question is how to deal with this first fraction of the sites.

In this paper we overcome these difficulties with a novel approach, and manage to deal with this first fraction of the sites using pruning.
First, we partition randomly the sites into $B$ and $R$, where $|B| \leq \alpha m$ for some constant $0 < \alpha < 1$ (Section~\ref{section:First phase}).
Then, we construct recursively an ``approximation'' of the FVD of $B$ (Section~\ref{section:Smaller Polygon}). 
To this end, we define a new weakly simple polygon $Q$ containing $B$ with only a constant fraction of the vertices of $P$. 
Essentially we prune from $P$ all the vertices that have nothing to do with geodesic paths connecting sites in $B$ with points in their respective Voronoi cells. 
Our approximation comes from recursively computing the FVD of $B$ in $Q$. 
We show that the complexity of $Q$ decreases sufficiently so that the recursive call leads to a linear overall running time.

Reducing the complexity however comes with a price. The $w$-distance from sites  of $B$ inside of $Q$ turns to be only ``similar'' to that in $P$.
However, we make sure that these functions are accurate where it matters. 
After computing this ``Voronoi-like'' diagram for~$B$, we need to deal with the sites of $R$. 
To this end, we turn to the RIC (Section~\ref{section: Insertion process}).
We compute the $w$-distance from sites in $R$ only inside of specific parts of $P$, making sure that they suffice for our purpose, while their overall complexity remains linear.
Another challenge comes from the fact that the $w$-distances from $B$ are with respect to $Q$, while the ones from $R$ are not.
Thus, we need to prove that the upper envelope of these functions induces a Voronoi-like diagram.
Once we deal with these technical details, we end up with an upper envelope of functions that we prove to coincide with the FVD of $\s$ in $P$ finishing our construction. 

We show that the insertion of each site $r\in R$ can be carried out in expected $O(n/m)$ time.  
Thus, inserting all sites of $R$ can be done in expected $O(n+ m)$ time.
After inserting the sites of $R$, the expected total running time of our algorithm is given by the simple recurrence $\ex{T(n, m)} \leq T(n/2, m/2) + O(n+m) = O(n+ m)$.
The crucial aspect with our approach that could not be achieved before this paper, is the reduction in the complexity of the polygon. 
Overall, we combine many different tools, from recursion, pruning, and randomization, together with all the machinery to deal with geodesic functions. 
%Due to space constraints, all results marked with $[*]$ have their full proof in the Appendix. 

\section{Preliminaries}

Let $P$ be a simple $n$-gon and  let $\s$ be a set of $m\geq 3$ sites contained in $V(P)$. 
Because $\s\subset V(P)$, we know that $m = |\s|\leq  n$.

A subset $G\subseteq P$ is \emph{geodesically convex} in $P$ if for each $x,y\in G$, the geodesic path between $x$ and $y$ is contained in $G$, i.e., if $\p{x}{y}\subseteq P$.
Given a set $A$ of points in $P$, the \emph{geodesic hull} of $A$ in $P$ is the minimum geodesically convex set in $P$ that contains $A$. 
In particular, if $A\subseteq \partial P$, then the boundary of the geodesic hull of $A$ is obtained by joining consecutive points of $A$ along $\partial P$ by the geodesic path between them. Note that this geodesic hull is not necessarily a simple polygon but a weakly simple polygon. 
Geodesic functions in weakly simple polygons behave in the exact same way as in simple polygons, and the existent machinery applies directly with no overhead~\cite{chang2014detecting}. 
Thus, while many papers state their results for simple polygons, they applied directly to weakly simple polygons. 
In particular, all results and tools presented in this paper apply directly to weakly simple polygons.
This remark is already crucial in several recent papers~\cite{oh2016computing,oh2016farthest}.

\begin{lemma}[Restatement of Lemma 2 of~\cite{kpairpath}]\label{lemma:Geodesic hull computation}
Let $A\subseteq \partial P$ be a set of $O(n)$ points sorted along~$\partial P$. 
The geodesic hull of $A$ in $P$ can be computed in $O(n)$ time.
\end{lemma}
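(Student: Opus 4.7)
My plan is to build the boundary of the geodesic hull as the cyclic concatenation of the geodesic paths joining consecutive points of $A$ along $\partial P$, and to show that all of these paths can be computed together in linear total time.

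I would start by triangulating $P$ in $O(n)$ time using Chazelle's linear-time triangulation algorithm. Writing $A = (a_1, \ldots, a_k)$ in the cyclic order along $\partial P$, the boundary of the geodesic hull of $A$ in $P$ is precisely the closed curve obtained by concatenating $\pi_P(a_i, a_{i+1})$ for $i = 1, \ldots, k$ (indices modulo $k$). For each consecutive pair, the arc of $\partial P$ from $a_i$ to $a_{i+1}$ that does not contain the remaining points of $A$, together with $\pi_P(a_i, a_{i+1})$, encloses a ``pocket'' that is cut away when passing from $P$ to its geodesic hull. Every reflex vertex of $P$ that appears as a bend of $\pi_P(a_i, a_{i+1})$ must lie on the boundary of that pocket, so each reflex vertex of $P$ is a bend of at most one of these paths, and the total combinatorial size of the hull boundary is $O(n)$.

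To compute the individual geodesics I would apply the Guibas--Hershberger funnel algorithm on top of the triangulation. For each consecutive pair $(a_i, a_{i+1})$, the triangles dual to the unique path in the triangulation's dual tree joining the triangles incident to $a_i$ and $a_{i+1}$ form a sleeve through which $\pi_P(a_i, a_{i+1})$ passes, and the funnel algorithm produces the geodesic in time linear in the sleeve size. The crucial combinatorial observation is that the pockets are interior-disjoint and their union with the hull covers $P$, so the sleeve used to build the geodesic of pair $i$ contains only triangles inside the pocket of pair $i$ (plus $O(1)$ triangles adjacent to the diagonals at its endpoints). Consequently the total sleeve size, summed over all $k$ pairs, is $O(n)$.

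The main obstacle is to make this amortization concrete given that the pockets are not known in advance, since we cannot delimit the sleeves before we have computed the paths. I would sidestep this by traversing the triangulation once along $\partial P$, recording for each boundary edge the unique triangle that contains it; because the points of $A$ are provided in boundary order, the dual-tree path between the triangles of $a_i$ and $a_{i+1}$ is recovered in time proportional to its own length by walking from the previously processed triangle, so each triangle is visited $O(1)$ times across all pairs. Running the funnel algorithm on each extracted sleeve and concatenating the resulting polygonal chains in cyclic order then yields the (weakly simple) boundary of the geodesic hull in $O(n)$ total time, as claimed.
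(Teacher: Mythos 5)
The paper does not actually prove this lemma: it is imported verbatim as Lemma~2 of the cited reference, so your argument has to stand on its own. Your overall route (triangulate, compute the geodesics between consecutive points of $A$ with the funnel algorithm, amortize via the dual tree) is the standard and correct way to prove such a statement, and your final paragraph contains the argument that actually works: since the points of $A$ appear in boundary order, the triangles containing them appear as a subsequence of the Euler tour of the dual tree, so the cyclic sum of dual-tree distances between consecutive triangles is at most twice the number of triangles, giving $O(n)$ total sleeve size and hence $O(n)$ total time and output complexity.

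However, two of the intermediate claims in your second and third paragraphs are false and should not be presented as the ``crucial combinatorial observation.'' First, a reflex vertex of $P$ can be a bend of more than one of the paths $\pi_P(a_i,a_{i+1})$: the pockets are only \emph{interior}-disjoint, and their closures can share reflex vertices. For instance, if $P$ consists of two chambers joined by a narrow zigzag corridor and $A$ has points in both chambers on both sides, then the two geodesics that traverse the corridor in opposite directions both bend at the same blocking vertices of the corridor; note also that such a vertex need not lie on the boundary arc of the pocket it bounds. Second, and for the same reason, the sleeve of pair $i$ is \emph{not} contained in the pocket of pair $i$: the geodesic $\pi_P(a_i,a_{i+1})$ lies on the boundary of its pocket, so the triangles it crosses straddle that boundary, and a single triangle (e.g., one inside the corridor above) can belong to several sleeves. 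Neither error is fatal, because the Euler-tour amortization you give at the end bounds the total sleeve size directly without any disjointness assumption, and the total complexity of the computed hull boundary is in turn bounded by the total sleeve size. You should therefore delete the pocket-disjointness justification and let the dual-tree argument carry the whole proof.
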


%An important consequence of the Lemma~\ref{lemma:Voronoi coincides for geodesically convex subsets} is the following result that will later allow us to simplify the structure of $P$ when computing FVDs. 
%
%\begin{corollary}\label{corollary:shortcut in Voronoi}
%For each $s\in \s$, if $a$ and $b$ are two points in $\partial P$ such that no site lies in $\partial P(a, b)$ and $a,b\in \cell{s, \s}$ for some $s\in \s$, then $\p{a}{b}\subseteq \cell{s, \s}$.
%\end{corollary}
%\begin{proof}
%Let $G$ be the geodesic hull of $A = \s\cup \{a, b\}$. 
%Assume for a contradiction that $\p{a}{b}$ intersects the interior of some other Voronoi cell. 
%Since $\vd[G]{\s}$ and $\vd{\s}$ coincide inside of $G$ by Lemma~\ref{lemma:Voronoi coincides for geodesically convex subsets}, in particular they coincide when restricted to the path $\p{a}{b}\subseteq G$. 
%Since $a$ and $b$ are two consecutive points of $A$ along $\partial P$, we know that $\p{a}{b}\subseteq \partial G$.
%Thus, the boundary of $\cell{s, \s}$ in $\vd[G]{\s}$ is disconnected---a contradiction with the fact that the FVD of any set of sites in a polygon is a connected tree. Therefore, we conclude that $\p{a}{b}\subseteq \cell{s, \s}$.
%\end{proof}

Let $\bvd{\s}$ be the FVD of $\s$ restricted to the boundary of $P$. 
More formally, for each $s\in \s$, let $\bcell{s, \s} = \cell{s, \s}\cap \partial P$ be the \emph{boundary cell} of $s$ and let $\bvd{\s}$ be the union of this boundary cells. The construction of $\bvd{\s}$ has always been a stepping stone in the computation of $\vd{\s}$~\cite{aronov1993furthest,oh2016farthest}, and in our algorithm it is not any different. 
The following result from~\cite{oh2016farthest} allows us to compute it efficiently.

\begin{theorem}[Theorem 9~\cite{oh2016farthest}]\label{thm:VD in boundary}
Let $P$ be an $n$-gon and let $\s\subseteq V(P)$ be a set of sites. 
Then, we can compute $\bvd{\s}$ in $O(n)$ time. 
\end{theorem}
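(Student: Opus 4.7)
My strategy would build on linear-time geodesic-query machinery for simple polygons together with a matrix-searching argument in the spirit of Hershberger and Suri's geodesic diameter algorithm. I would proceed in three phases.

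First, I would preprocess $P$ in $O(n)$ time: triangulate it via Chazelle's algorithm, compute a shortest-path tree from an arbitrary vertex, and build the funnel/hourglass data structures of Guibas and Hershberger so that for any two vertices $u, v \in V(P)$ the geodesic distance $\g{u}{v}$, and the first edge of $\p{u}{v}$, can be recovered in $O(1)$ amortized time. Since every site in $\s$ is already a vertex of $P$, this suffices to evaluate $\dd{s}{x}$ in $O(1)$ amortized time for any $s \in \s$ and any $x \in V(P)$.

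Second, I would exploit the structural properties of $\bvd{\s}$. Because $P$ is simply connected and all sites lie on $\partial P$, each boundary cell $\bcell{s,\s}$ is a finite union of intervals of $\partial P$, and (after unfolding $\partial P$ into a cycle of edges) the sites appear along $\bvd{\s}$ in a cyclic order consistent with their order on $\partial P$. More importantly, if one indexes boundary points and sites in their respective cyclic orders, the matrix whose $(x, s)$-entry is $\dd{s}{x}$ satisfies an inverse-Monge (totally monotone) property on each edge of $P$: the farthest-site index is monotone as one walks along $\partial P$. This is the key fact that makes linear time possible.

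Third, with the Monge property in hand, I would run tree-based matrix searching (the same tool Hershberger and Suri use for the diameter) on the dual of the triangulation of $P$, traversing the polygon edge by edge and carrying out SMAWK-style searches on each edge to locate all the breakpoints between consecutive cells. The total work is proportional to the size of the input, $O(n + m) = O(n)$ since $m \leq n$. The hard part will be to cleanly establish the Monge / total-monotonicity claim: the relevant distance functions are piecewise linear-plus-hyperbolic, and breakpoints can lie in the interior of an edge rather than at a vertex, so one must argue that along a directed arc of $\partial P$ any two such functions cross at most once and then glue these local statements into a global monotonicity of the farthest-site index. Once this is done, the cyclic wrap-around is handled by cutting $\partial P$ at an arbitrary point and doubling the sequence, with only constant-factor overhead, yielding the claimed $O(n)$ bound.
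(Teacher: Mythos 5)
The paper does not actually prove this statement; it imports it verbatim as Theorem~9 of Oh et al.~\cite{oh2016farthest}, so your sketch has to be measured against the proof in that reference. Your skeleton --- compute farthest neighbours by Hershberger--Suri-style matrix search, then exploit the monotone order of cells along $\partial P$ --- is indeed how that proof begins. But note that the monotonicity you flag as ``the hard part'' is not: it is the known Ordering Lemma of Aronov, Fortune and Wilfong (each nonempty cell meets $\partial P$ in a \emph{single} arc, and these arcs appear in the reverse cyclic order of the sites), and it follows from the tree structure of the Voronoi graph rather than from a pairwise single-crossing argument. Incidentally, a directed boundary arc can see two distance functions cross twice (a bisector meets $\partial P$ in two points), so the local claim as you state it is false; it is the global tree/ordering argument that saves the day.

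The genuine gaps are elsewhere. First, there is no $O(1)$ amortized two-point geodesic-distance oracle: Guibas--Hershberger gives $O(\log n)$ per query, and Hershberger--Suri obtain linear \emph{total} time only for the specific batched all-vertices farthest-neighbour computation via a tailored divide-and-conquer, not via a reusable constant-time query structure. Second, and more seriously, matrix searching only identifies the farthest site of each \emph{vertex}; the leaves of $\bvd{\s}$ lie in the interiors of edges, and locating them is a continuous upper-envelope problem over functions $x \mapsto \dd{s}{x}$ that are piecewise hyperbolic with up to $\Theta(n)$ pieces each. SMAWK is a discrete row-maxima tool and does not apply to this step, and a naive per-edge envelope computation has no subquadratic guarantee. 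The cited proof closes exactly this gap: it partitions $\partial P$ into transition edges and chains, builds the funnels from the candidate sites to each chain, invokes the Ahn et al.\ bound that the total complexity of all these funnels is $O(n)$ (the same machinery this paper reuses in Lemma~\ref{lemma:Complexity of funnels} and Lemma~\ref{lemma:SPMForSitesR}), and only then merges constant-description pieces of the envelopes chain by chain. Without a funnel-complexity bound of this kind, your plan stalls precisely where the real work of Theorem~\ref{thm:VD in boundary} lies.
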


Using the above procedure, we can find out in $O(n)$ time which sites of $\s$ have a non-empty Voronoi cell. 
Therefore, we can forget about the sites with empty cells and assume without loss of generality from now on that all sites of $\s$ have non-empty Voronoi cells.

Given a site $s\in \s$ and a polygonal chain $C \subseteq \partial P$ with endpoints $p$ and $p'$, the \emph{funnel} of $s$ to $C$ in $P$, denoted by $\funnel{s\to C}$, 
is the geodesic hull of $s$ and $C$ in $P$. 
It is known that $\funnel{s\to C}$ coincides with the weakly simple polygon contained in $P$ bounded by $C$, $\p{s}{p'}$ and $\p{s}{p}$~\cite{ahn2015linear}.
%Recall that the paths $\p{s}{p}$ and $\p{s}{p'}$ have a common subpath (maybe consisting only of $s$), and then they split never to meet again. 
%The \emph{tail} of $\funnel{s, \to C}$ is the path being the intersection of $\p{s}{p}$ and $\p{s}{p'}$. One endpoint of the tail is always $s$ and the other, called the \emph{anchor}, is the last point in which $\p{s}{p}$ and $\p{s}{p'}$ coincide when going from $s$ towards $C$. Note that this anchor is always a vertex of $P$.
For ease of notation, we denote $\funnel{s \to \bcell{s, \s}}$ simply by $\funnel{s, \s}$, i.e., the funnel with apex $s$ that goes to $\bcell{s, \s}$.
The following lemma shows the relation between Voronoi cells and their funnels.

\begin{lemma}[Consequence of Lemma 4.1 of~\cite{ahn2015linear}]\label{lemma:Voronoi cell in funnel}
Given a site $s\in \s$, the Voronoi cell $\cell{s, \s}$ is contained in the funnel $\funnel{s, \s}$.
\end{lemma}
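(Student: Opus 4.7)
The plan is to take an arbitrary point $x\in \cell{s,\s}$ and show it lies on some geodesic path from $s$ to a boundary point of $\bcell{s,\s}$; since $\funnel{s,\s}$ is the geodesic hull of $s$ and $\bcell{s,\s}$, it contains all such geodesics, which would close the argument. If $x\in \partial P$ then $x\in \bcell{s,\s}\subseteq \funnel{s,\s}$ and we are done, so the interesting case is when $x\in \interior{P}$.

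First I would consider the geodesic $\p{s}{x}$ and extend its final segment past $x$ until it first meets $\partial P$ at some point $y$. By construction, $\p{s}{y}$ is simply $\p{s}{x}$ followed by the segment $xy$; in particular $x$ lies on $\p{s}{y}$ and $\g{s}{y} = \g{s}{x}+|xy|$.

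The key step is to argue that $y\in \bcell{s,\s}$. For any other site $s'\in \s$, the triangle inequality applied along $\p{s'}{x}$ concatenated with the segment $xy$ (which lies inside $P$ since we only extended inside the polygon until hitting $\partial P$) gives
\[
\dd{s'}{y}\;=\;w(s')+\g{s'}{y}\;\leq\;w(s')+\g{s'}{x}+|xy|\;=\;\dd{s'}{x}+|xy|.
\]
Since $x\in \cell{s,\s}$, we have $\dd{s}{x}\geq \dd{s'}{x}$, hence
\[
\dd{s}{y}\;=\;\dd{s}{x}+|xy|\;\geq\;\dd{s'}{x}+|xy|\;\geq\;\dd{s'}{y},
\]
so $s$ is indeed a farthest site of $y$, i.e.\ $y\in \bcell{s,\s}$.

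Finally, since $\funnel{s,\s}$ is the geodesic hull of $s$ and $\bcell{s,\s}$, it is geodesically convex and contains both $s$ and $y\in \bcell{s,\s}$, so it contains the entire geodesic $\p{s}{y}$ and in particular the point $x$. The only subtle point I foresee is justifying that the extension segment $xy$ stays in $P$ and that extending along the last edge of the polygonal path $\p{s}{x}$ is well-defined; this follows from the standard fact that a geodesic enters any open disk around $x$ before leaving $P$, so the ray along the last segment can be pushed forward by a positive amount inside $P$, and we simply stop the first time it touches $\partial P$.
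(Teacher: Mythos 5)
Your proof is correct. The paper does not prove this lemma at all—it simply cites it as a consequence of Lemma 4.1 of Ahn et al.—and your argument (extend the last segment of $\p{s}{x}$ to a boundary point $y$, use the triangle inequality to show $y\in\bcell{s,\s}$, then invoke geodesic convexity of the funnel) is precisely the standard argument underlying that cited lemma, including the correct handling of the weights and of the subtlety that the extension of a taut path remains the geodesic $\p{s}{y}$.
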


We are also interested in bounding the total complexity of the funnels of sites in $\s$.
Given a polygon $Q$, let $|Q|$ denotes its \emph{combinatorial complexity} (or just \emph{complexity}), i.e., the number of vertices and edges used to represent it. 

\begin{lemma}[Consequence of Corollaries 3.8 and 4.4 of~\cite{ahn2015linear}]\label{lemma:Complexity of funnels}
Given an $n$-gon $P$ and a set $\s\subseteq V(P)$, $\sum_{s\in \s} |\funnel{s, \s}| = O(n)$.
Also, all funnels can be computed in $O(n)$ time. 
\end{lemma}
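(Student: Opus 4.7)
The plan is to decompose each funnel into its boundary arc on $\partial P$ and its two bounding geodesic chains, bound each piece separately, and deduce the summation from there.

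Each funnel $\funnel{s,\s}$ is the weakly simple polygon bounded by $\bcell{s,\s}$ together with the two geodesic paths $\p{s}{p}$ and $\p{s}{p'}$, where $p,p'$ are the endpoints of $\bcell{s,\s}$; thus $|\funnel{s,\s}| \leq |\bcell{s,\s}| + |\p{s}{p}| + |\p{s}{p'}|$. Since $\bvd{\s}$ partitions $\partial P$ into at most $m$ arcs whose combined vertex set consists of the $n$ vertices of $P$ plus $O(m)$ breakpoints between adjacent boundary cells, the boundary portion sums to $\sum_{s\in\s} |\bcell{s,\s}| = O(n+m) = O(n)$.

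The key step is bounding $\sum_{s\in\s} (|\p{s}{p}| + |\p{s}{p'}|)$. Each boundary-cell endpoint is a leaf of the Voronoi tree on $\partial P$ shared by exactly two sites, and the corresponding two geodesic chains are the bounding paths of the Voronoi edge that separates the adjacent cells near that leaf. Across all leaves, these chains assemble into a shortest-path forest rooted at the Voronoi tree whose structural properties are described by Corollaries 3.8 and 4.4 of~\cite{ahn2015linear}: each reflex vertex of $P$ lies in a unique Voronoi cell and therefore participates as a turn in only $O(1)$ such chains, yielding $\sum_{s\in\s} (|\p{s}{p}| + |\p{s}{p'}|) = O(n)$ and in turn $\sum_{s\in\s}|\funnel{s,\s}|=O(n)$.

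For the computational claim, I would first build $\bvd{\s}$ in $O(n)$ time via Theorem~\ref{thm:VD in boundary}, which identifies all boundary-cell endpoints $p$. Given these, the two geodesic chains bounding each funnel can be extracted from a shortest-path-tree traversal of $P$ in total time proportional to the complexity bound just established, again via the construction in~\cite{ahn2015linear}. The main obstacle is the charging in the middle step: since funnels may overlap in $P$, a reflex vertex could a priori appear as a turn vertex in geodesic paths from many different apices, so the $O(n)$ bound is not immediate and relies on the tree structure of $\vd{\s}$ together with the fact that within a fixed Voronoi cell the shortest-path tree toward its apex is uniquely defined. This is precisely the content of the cited corollaries, which I would invoke rather than reprove.
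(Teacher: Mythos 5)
The paper gives no proof of this lemma at all: it is imported as a black-box consequence of Corollaries~3.8 and~4.4 of~\cite{ahn2015linear}. Your decomposition of $|\funnel{s,\s}|$ into the boundary arc plus the two walls, the observation that the arcs $\bcell{s,\s}$ partition $\partial P$ and hence sum to $O(n+m)=O(n)$, and your ultimate decision to invoke the cited corollaries for the wall bound therefore land in the same place as the paper.

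However, the justification you interpolate for the key step is not sound, and it is not a correct paraphrase of what the cited corollaries say. From ``each reflex vertex of $P$ lies in a unique Voronoi cell'' it does not follow that such a vertex ``participates as a turn in only $O(1)$ such chains.'' The wall $\p{s}{p}$ of $\funnel{s,\s}$ is not confined to $\cell{s,\s}$ --- Lemma~\ref{lemma:Voronoi cell in funnel} gives only $\cell{s,\s}\subseteq\funnel{s,\s}$, not the converse --- so a reflex vertex lying in the cell of one site can perfectly well be a turn vertex of the walls of many other sites' funnels, and nothing in your argument excludes a single reflex chain being traversed by a large fraction of the $2m$ walls. The actual mechanism in Ahn et al.\ is global rather than per-vertex: it exploits the fact that the sites and their (non-empty) boundary cells appear in matching cyclic orders along $\partial P$, which makes the hourglasses spanned by consecutive cells and their defining sites openly disjoint, and it is this disjointness that yields the $O(n)$ bound on the \emph{total} complexity; it does not assert $O(1)$ multiplicity for each individual reflex vertex. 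As a citation-level argument your proposal is acceptable (and matches the paper), but the charging scheme you sketch would fail if you actually had to carry it out, so the paraphrase should be removed or replaced by the disjointness argument.
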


\subsection{The simplification transform}\label{section:Simplification}

\begin{figure}[t]
\centering
\includegraphics{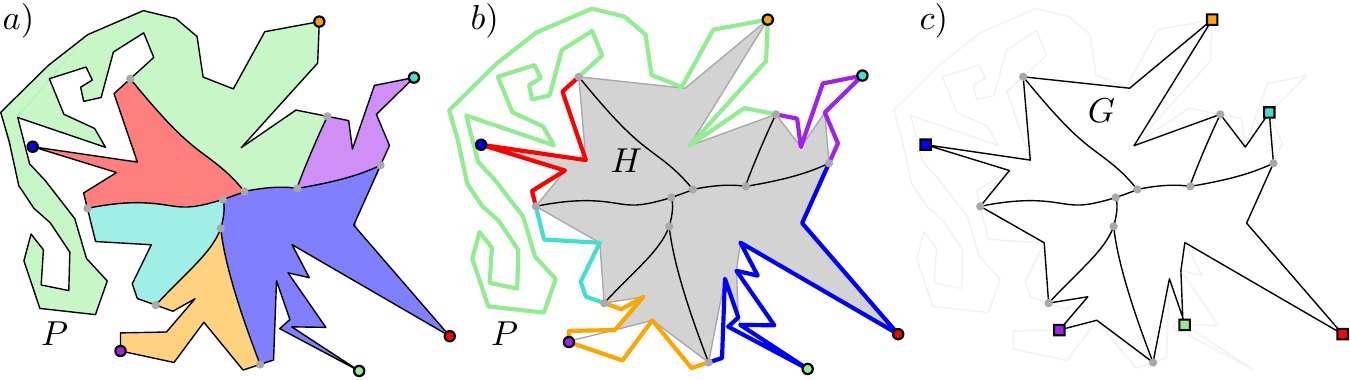}
%[width=1\textwidth]
\caption{$a)$ A simple polygon $P$ with a set $\s$ of six weighted sites and their FVD. $b)$ The polygon $H$ being the geodesic hull of $\LL$ and $\s$.
$c)$ The simplification transform allows to redefine the problem inside a simpler polygon $G$ with a new set of weighted sites and obtain the same FVD.}
\label{fig:FVD}
\end{figure}

The following transformation allows us to modify the input of our problem and assume some nice structural properties without loss of generality.
In this section and for this transformation, we allow the given polygon $P$ to be weakly simple instead of a simple $n$-gon. 
The result of the transformation described in this section takes a weakly simple polygon with a set of weighted sites as input, and produces a new simple polygon with a new set of weighted sites. Moreover, this resulting polygon has a particular structure that is crucial in the recursive calls of our algorithm.

Let $\LL$ be the set of leaves of $\vd{\s}$.
We first notice that we can focus on a specific geodesically convex subpolygon of $P$ to compute $\vd{\s}$.

\begin{lemma}\label{lemma:Voronoi coincides for geodesically convex subsets}
Let $H$ be the geodesic hull $\s\cup \LL$ in $P$.
Then, for each $s\in \s$, $\cell[H]{s, \s} \subseteq \cell{s, \s}$. Moreover, the Voronoi trees of $\vd[H]{\s}$ and $\vd{\s}$ coincide. 
\end{lemma}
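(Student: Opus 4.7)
The plan is to first show that the weighted geodesic distance from any site to any point of $H$ is the same whether computed in $H$ or in $P$, and then to leverage this identity for both parts of the lemma. Because $H$ is geodesically convex in $P$ and contains $\s$, the geodesic $\p{s}{x}$ lies entirely in $H$ for every $s\in\s$ and every $x\in H$, so $\g[H]{s}{x}=\g{s}{x}$ and hence $\dd[H]{s}{x}=\dd{s}{x}$. Also, since $H\subseteq P$, restricting the feasible region cannot shorten a shortest path, so $\dd[H]{s'}{x}\ge\dd{s'}{x}$ for every $s'\in\s$ and every $x\in H$. The first claim then follows directly: for $x\in\cell[H]{s,\s}$ and any $s'\in\s$,
\[
\dd{s}{x}=\dd[H]{s}{x}\ge\dd[H]{s'}{x}\ge\dd{s'}{x},
\]
so $s$ is an $\s$-farthest site of $x$ in $P$, yielding $x\in\cell{s,\s}$.

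For the tree coincidence, the distance identity also implies $\cell[H]{s,\s}=\cell{s,\s}\cap H$ for every $s$, so the Voronoi tree $T_H$ of $\vd[H]{\s}$ is precisely the Voronoi tree $T$ of $\vd{\s}$ intersected with $H$; hence the two trees coincide if and only if $T\subseteq H$. To establish this, I would analyze each connected component $R$ of $P\setminus H$. Such an $R$ is a weakly simple region bounded by a geodesic chord $\gamma\subseteq\partial H$ between two consecutive elements $a,b$ of $\s\cup\LL$ along $\partial P$, together with the arc $\sigma\subseteq\partial P$ from $a$ to $b$ containing no other element of $\s\cup\LL$ in its interior. Since $\LL$ records exactly the transitions of the farthest site along $\partial P$, the boundary cell of a single site $s^{\star}\in\s$ covers all of $\sigma$, and the goal is to prove $R\subseteq\cell{s^{\star},\s}$. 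This containment, summed over components, gives $T\cap(P\setminus H)=\emptyset$, i.e., $T\subseteq H$.

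The technical heart of the proof is therefore the containment $R\subseteq\cell{s^{\star},\s}$. For $x\in R$ and any site $s\in\s$, the geodesic $\p{s}{x}$ must cross $\gamma$ (since $s\in H$ and $x\notin H$), so one can write $\dd{s}{x}=\min_{z\in\gamma}(\dd{s}{z}+\g{z}{x})$. This reduces the desired dominance $\dd{s^{\star}}{x}\ge\dd{s}{x}$ to the pointwise inequality $\dd{s^{\star}}{z}\ge\dd{s}{z}$ on $\gamma$. The hard part is carrying the dominance of $s^{\star}$ from $\sigma$ (where it holds by definition of $\bvd{\s}$) to the interior chord $\gamma$, since a priori $\gamma$ might cross the Voronoi tree. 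The cleanest route is through the funnel: by Lemma~\ref{lemma:Voronoi cell in funnel}, $\cell{s^{\star},\s}\subseteq\funnel{s^{\star},\s}$, and since $\funnel{s^{\star},\s}$ is the geodesic hull of $s^{\star}$ and $\bcell{s^{\star},\s}\supseteq\sigma$, a direct argument shows that it contains both $\gamma$ and $R$; combined with the reduction above, one can then verify that $s^{\star}$ is a farthest site throughout this funnel. I expect the main obstacle to be a careful case analysis at the endpoints $a,b$, most delicate when $a\in\LL$, since a bisector arc of $T$ emanates from $a$ and one must verify that it enters $H$ rather than $R$.
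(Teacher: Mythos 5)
Your first half coincides with the paper's argument: since $H$ is geodesically convex in $P$ and contains every site, $\dd[H]{s}{x}=\dd{s}{x}$ for all $s\in\s$ and $x\in H$; in fact you do not even need the one\-/sided inequality $\dd[H]{s'}{x}\ge\dd{s'}{x}$, because all sites lie in $H$ and all distances are preserved exactly, which gives $\cell[H]{s,\s}=\cell{s,\s}\cap H$ outright. For the tree coincidence the paper contents itself with one sentence (the cells agree on $H$ and the leaves $\LL$ lie in $H$), whereas you correctly identify the real content: the Voronoi tree of $\vd{\s}$ must not enter any component $R$ of $P\setminus H$. Your reduction of this to the pointwise dominance $\dd{s^\star}{z}\ge\dd{s}{z}$ on the chord $\gamma$, via $\dd{s}{x}=\min_{z\in\gamma}\bigl(\dd{s}{z}+\g{z}{x}\bigr)$ for $x\in R$, is sound.

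The step you yourself call the technical heart, however, is not established, and the route you sketch for it fails. You propose to show $R\cup\gamma\subseteq\funnel{s^\star,\s}$ and then ``verify that $s^\star$ is a farthest site throughout this funnel.'' That claim is false in general: Lemma~\ref{lemma:Voronoi cell in funnel} only gives $\cell{s^\star,\s}\subseteq\funnel{s^\star,\s}$, and the funnel typically contains points whose farthest site is not $s^\star$ (starting with the apex $s^\star$ itself and its neighborhood). So containment in the funnel gives you nothing toward the dominance you need on $\gamma$, and the second half of the lemma remains unproved. A way to close the gap is a shadow argument in the spirit of Lemma~\ref{lemma:Shadow points}: for $z$ in the relative interior of $\gamma$ with farthest site $f(z)$, the geodesic $\p{f(z)}{z}$ reaches $z$ from the $H$-side (geodesic convexity), so extending its last segment beyond $z$ enters $R$; since every turning vertex of $\gamma=\p{a}{b}$ is a reflex vertex of $P$ lying on $\sigma$, the pocket $R$ splits at these vertices into sub-pockets each bounded by one straight edge of $\gamma$ and a sub-arc of $\sigma$, so the extension exits through $\sigma\subseteq\partial P$ at a point $z^*$ that still has $f(z)$ as its farthest site; as the relative interior of $\sigma$ lies in $\bcell{s^\star,\s}$, this forces $f(z)=s^\star$, and the endpoints $a,b$ follow by closedness of cells. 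Without some such argument (or an explicit appeal to a structural result of Aronov et al.\ giving $\gamma\subseteq\cell{s^\star,\s}$), your proof of the tree coincidence is incomplete.
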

\begin{proof}
Let $s$ be a site of $\s$ and let $x\in H$. 
Because $H$ is a geodesically convex subset of $P$, and since $x,s\in H$, we know that $\p[H]{x}{s} = \p{x}{s}$. 
That is, the $w$-distance to $x$ from each site in $\s$ is the same in $P$ and $H$. 
Therefore, the $\s$-farthest sites of $x$ are also preserved, which implies that $\cell[H]{s, \s} \subseteq \cell{s, \s}$. 
Because this happens for each Voronoi cell of $\s$ in $H$, and since the leaves belong also to $H$, the Voronoi trees coincide.
\end{proof}

While the geodesic hull $H$ of $\s\cup \LL$ in $P$ does not necessarily have lower complexity than $P$, it has some nice structure. 
We know that its boundary consists of geodesic paths that connect consecutive points in $\s\cup \LL$ along $\partial P$.
However, this geodesic hull $H$ is not necessarily a simple polygon; see Figure~\ref{fig:FVD}. 
To make it simple, we need to deal with ``dangling paths'' as follows.

We say that a vertex of $H$ is \emph{$H$-open} if it is incident to its interior.
For each $s\in \s$, let $a_s$ be the $H$-open vertex of $\funnel{s, \s}$ that is geodesically closest to $s$. 
Note that all paths from $s$ to any point in the interior of $H$ pass through $a_s$.
However, as long as the length of the geodesic path $\p{\s}{a_s}$ remains the same, the shape of this path is irrelevant. 
In fact, this is equivalent to giving $a_s$ a weight such that each distance measured from $a_s$ to points in the interior of $H$ has 
an added value of $\dd{s}{a_s}$. 

To formalize this intuition, we define a new polygon, a new set of sites, and new weighted distance function as follows.
Let $\A = \{a_s: s\in \s\}$ be the set of $m$ $H$-open vertices defined by $\s$. These vertices are our new set of sites.
Let $G(P, \s)$ (or simply $G$ if $P$ and $\s$ are clear from the context) be the geodesic hull of $\A\cup \LL$ in $P$.
Note that $G\subseteq H$ is a simple polygon by the definition of each $a_s$ in $\A$.
We define a new weight function $w':G\to \mathbb{R}$ so that $w'(x) = \begin{cases} \dd{s}{a_s}& \text{if }x=a_s\in \A \\ 0&\text{otherwise} \end{cases}$  (if $s$ and $a_s$ coincide, then their weights coincide).
With this new weight function, we can think of $\A$ as a set of weighted sites in $G$.
%Moreover, for each $x\in P$ and a site $a_s\in \A$, 
%we define a \emph{weighted distance function} $\dd{a_s}{x} = w(a_s) + \dd{a_s}{x} = \dd{s}{a_s} + \dd{a_s}{x} = \dd{s}{x}$.
%Because our algorithm is recursive, we can think of it dealing always with this extended weighted distance function.
%It is however for any practical matter a geodesic function.
%Thus, from now on we can assume that all our sites are weighted, and that distances are measured using the weighted distance function $\dd{s}{x}$ from any site $s\in \s$.

\begin{restatable}{lemma}{ApexFVD}\label{lemma:ApexFVD}
%$[*]$ 
It holds that $\vd[G]{\A}$ and $\vd[P]{\s}$ have the same Voronoi trees.
\end{restatable}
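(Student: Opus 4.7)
The plan is to reduce the comparison to the single polygon $H$ and then show that the $w'$-distance from each new site $a_s \in \A$, measured in $G$, agrees pointwise with the $w$-distance from the original site $s \in \s$, measured in $H$, everywhere in $\interior{G}$. Since the Voronoi tree of either diagram is determined entirely by where the upper envelope of these distance functions fails to be smooth, pointwise equality of the functions forces equality of the trees.

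First I would invoke Lemma~\ref{lemma:Voronoi coincides for geodesically convex subsets} to replace $\vd[P]{\s}$ by $\vd[H]{\s}$: because $H$ is geodesically convex in $P$ and contains $\LL$, the two Voronoi trees already coincide. This eliminates $P$ from the discussion and lets us work entirely in $H$. Next, I would exploit the defining property of $a_s$ as the $H$-open vertex of $\funnel{s,\s}$ geodesically closest to $s$: by the funnel structure, every geodesic path in $H$ from $s$ to any point $x \in \interior{H}$ must pass through $a_s$, so
\[
\g[H]{s}{x} \;=\; \g[H]{s}{a_s} + \g[H]{a_s}{x}.
\]
Combining this with the assignment $w'(a_s)=\dd{s}{a_s}=w(s)+\g[H]{s}{a_s}$ (which uses again that $H$ is geodesically convex in $P$, so $\g{s}{a_s}=\g[H]{s}{a_s}$) gives, for every $x\in\interior{H}$,
\[
\dd[H]{s}{x} \;=\; w(s) + \g[H]{s}{a_s} + \g[H]{a_s}{x} \;=\; w'(a_s) + \g[H]{a_s}{x}.
\]

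To convert the right-hand side into $\ddw[G]{a_s}{x}$, I would use the definition of $G$ as the geodesic hull of $\A\cup\LL$ in $P$: this makes $G$ geodesically convex, so for any two points $a_s, x \in G$ the geodesic between them in $G$ coincides with the geodesic in $H$ (and in $P$), i.e.\ $\g[G]{a_s}{x} = \g[H]{a_s}{x}$. The previous equation therefore becomes $\ddw[G]{a_s}{x}=\dd[H]{s}{x}$ for every $x\in\interior{G}$. Taking upper envelopes over $\A$ and over $\s$ respectively yields the same function on $\interior{G}$, so the cells of the two diagrams satisfy $\cell[G]{a_s,\A}\cap\interior{G}=\cell[H]{s,\s}\cap\interior{G}$ under the bijection $s\leftrightarrow a_s$, and the Voronoi trees (which are the closures of $\interior{G}\setminus\bigcup\interior{\cell{\cdot}}$) coincide.

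The main obstacles are the two geometric facts that drive the argument: verifying that every geodesic path in $H$ from $s$ into $\interior{H}$ truly factors through $a_s$ (which needs a careful appeal to the funnel of $s$ and the choice of $a_s$ as the $H$-open vertex of $\funnel{s,\s}$ nearest to $s$), and arguing that the Voronoi tree of $\vd[H]{\s}$ actually lies inside $G$ rather than in the pruned-away corridors between each $s$ and its $a_s$. For the latter, I would note that $\LL\subseteq \partial G$ by construction and that the corridors removed to obtain $G$ from $H$ are contained in a single funnel $\funnel{s,\s}$ and hence, by Lemma~\ref{lemma:Voronoi cell in funnel}, contribute only to the cell of $s$ itself, never to the common boundary that forms the Voronoi tree.
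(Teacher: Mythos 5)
Your proof is correct and follows essentially the same route as the paper: both arguments reduce to showing $\ddw[G]{a_s}{x}=\dd{s}{x}$ pointwise on $G$ by combining the fact that every path from $s$ into $\interior{H}$ factors through $a_s$ with the weight assignment $w'(a_s)=\dd{s}{a_s}$ and the geodesic convexity of $G$. Your extra care in passing explicitly through $H$ (via Lemma~\ref{lemma:Voronoi coincides for geodesically convex subsets}) and in checking that the Voronoi tree cannot hide in the pruned corridors is detail the paper leaves implicit, not a different argument.
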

\begin{proof}
Notice that $G$ is a subset of $H$ such that $int(G)\subset int(H)$.
Let $s$ be a site of $\s$ and let $x\in G$. 
Because $G$ is a geodesically convex subset of $P$, and since $a_s, x\in G$, we know that $\p[G]{a_s}{x} = \p{a_s}{x}$, and hence the length of these paths is simply $\g{a_s}{x}$ . 
Therefore, we know that  $\ddw[G]{a_s}{x} = w'(a_s) + \g{a_s}{x} = \dd{s}{a_s} + \g{a_s}{x} = \dd{s}{x}$.
That is, the $w'$-distance from any site in $a_s \in \A$ to any point $x$ in $G$ is the same as the $w$-distance from the corresponding site in $s\in \s$ to the same point $x$ in $P$, which implies that $\cell[G]{s, \s} \subseteq \cell{s, \s}$. 
Because this happens for each Voronoi cell of $\s$ in $G$, the Voronoi trees coincide. 
\end{proof}
%}

By Lemma~\ref{lemma:ApexFVD}, we can always transform the problem of computing the FVD of $\s$ in $P$ as follows. 
Recall that $\LL$ is the set containing each leaf of $\vd{\s}$ and that $m= |\s|$.
Compute $\bvd{\s}$ and the funnel $\funnel{s, \s}$ of each site in $\s$ in total $O(n)$ time. 
Because $\s\cup \LL\subseteq \partial P$ and has size $2 m = O(n)$, we can compute $H$ the geodesic hull of $\s\cup \LL$ in $P$ in $O(n)$ time using Lemma~\ref{lemma:Geodesic hull computation}. 
After that, consider the set $\A$ of $H$-open vertices as defined above.
Again, we can compute the geodesic hull $G$ of $\A\cup \LL$ in $P$ in $O(n)$ time using Lemma~\ref{lemma:Geodesic hull computation}. 
By Lemma~\ref{lemma:ApexFVD}, $\vd[G]{\A}$ and $\vd[P]{B}$ coincide, so we can forget about $P$ and $S$, and focus simply on $G$ and $\A$ to compute the FVD.
We call this process the \emph{simplification transform}; see Figure~\ref{fig:FVD}. 
Note that the only convex vertices of $G$ are the sites in $\A$ and the leaves in $\LL$. 
We summarize the main properties of this simplification transform in the following result.

\begin{lemma}\label{lemma:Properties of simplification}
Let  $P$ be a simple $n$-gon and let $\s\subseteq V(P)$ be a set of $m\geq 3$ sites.
The simplification transform computes in $O(n)$ time a new simple polygon $G$ with at most $n+m$ vertices and a new set $\A\subseteq V(G)$ of $m$ weighted sites such that (1) the Voronoi trees of $\vd[G]{\A}$ and $\vd[P]{B}$ coincide, and (2) the set of convex vertices of $G$ is exactly $\A\cup \LL$.
\end{lemma}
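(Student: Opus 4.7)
The statement bundles together the claims already described in the preceding paragraphs, so my plan is to verify each item in turn by executing the steps of the simplification transform and accounting for their cost and output. The correctness statement (1), that the Voronoi trees of $\vd[G]{\A}$ and $\vd[P]{\s}$ coincide, is already the content of Lemma~\ref{lemma:ApexFVD}; what remains is to justify the running time, the bound $|G|\le n+m$, the simplicity of $G$, and the convex-vertex characterization in (2).

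For the running time I would execute the transform in four passes. First, compute $\bvd{\s}$ and hence the set $\LL$ of leaves in $O(n)$ time by Theorem~\ref{thm:VD in boundary}. Second, compute all funnels $\funnel{s, \s}$ in $O(n)$ time by Lemma~\ref{lemma:Complexity of funnels}; since $\sum_{s\in\s}|\funnel{s,\s}|=O(n)$, walking each funnel from $s$ toward its apex until the first $H$-open vertex identifies every $a_s$ in $O(n)$ total. Third, since $\A\cup\LL\subseteq\partial P$ has size $O(m)=O(n)$ and is already sorted along $\partial P$, Lemma~\ref{lemma:Geodesic hull computation} gives $G$ in $O(n)$ time. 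Together this is $O(n)$.

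For the complexity bound I would observe that each $a_s$ is either $s$ itself (a vertex of $P$) or a reflex vertex of $P$ traversed by a geodesic path inside $\funnel{s, \s}$, so $\A\subseteq V(P)$. Every vertex of $G$ therefore falls into one of two classes: a vertex of $P$ that happens to lie on $\partial G$ (at most $n$), or a leaf in $\LL$ lying in the interior of an edge of $P$ (at most $|\LL|\le m$). Summing gives $|G|\le n+m$. Simplicity follows from the very choice of $a_s$: the only way $H$ can fail to be simple is via the ``dangling'' subpath from $s$ to $a_s$, which is precisely what is excised when we replace $\s$ by $\A$, leaving $G\subseteq H$ simple.

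For the convex-vertex characterization in (2), I would use the fact that $\partial G$ is a cyclic concatenation of geodesic arcs joining consecutive points of $\A\cup\LL$ along $\partial P$. Interior vertices of such a geodesic arc are necessarily reflex vertices of $P$, hence reflex in $G$, which rules out convex vertices outside $\A\cup\LL$. The main obstacle I anticipate is checking that every $p\in\A\cup\LL$ is genuinely convex rather than degenerate; I would handle this by noting that both geodesic arcs incident to $p$ approach $p$ from the interior of $G$, forcing the interior angle at $p$ to be strictly less than $\pi$. Combined with Lemma~\ref{lemma:ApexFVD} for (1), this completes the plan.
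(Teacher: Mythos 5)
Your proposal is correct and follows essentially the same route as the paper, which proves this lemma only implicitly through the paragraph preceding it: compute $\bvd{\s}$ and the funnels (Theorem~\ref{thm:VD in boundary}, Lemma~\ref{lemma:Complexity of funnels}), build the hulls with Lemma~\ref{lemma:Geodesic hull computation}, and invoke Lemma~\ref{lemma:ApexFVD} for item (1), with the size and convexity claims left as asserted observations that you fill in reasonably. The one step you elide is that the paper explicitly computes the intermediate hull $H$ of $\s\cup\LL$ (again in $O(n)$ time via Lemma~\ref{lemma:Geodesic hull computation}) before extracting the $H$-open vertices $a_s$ --- without $H$ in hand, ``first $H$-open vertex'' along a funnel wall is not an operational test.
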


\section{Computing the FVD}
Let  $P$ be a simple polygon and let let $\s$ be a set of $m\geq 3$ weighted sites contained in $V(P)$.
Using the simplification transform defined in Section~\ref{section:Simplification}, we can assume without loss of generality that $P$ is a simple polygon with at most $n+m$ vertices, and among them, its convex vertices are exactly the sites in $\s$ and the leaves of $\LL$ (see Lemma~\ref{lemma:Properties of simplification}). 
That is, it consists of at most $2m$ convex vertices.
If we consider consecutive vertices in $\s\cup \LL$ along $\partial P$, the chain connecting them consists only of reflex vertices of $P$, or is a single edge. 
The next step explained in the following section is to randomly partition $\s$.
Note that if $m = O(1)$, we can compute $\vd{\s}$ in $O(n)$ time by computing their bisectors and considering their overlay. Thus, from now on we assume that $m$ is larger than some predefined constant.

\subsection{First phase: the partition}\label{section:First phase}
We compute in linear time $\bvd{\s}$ using Lemma~\ref{thm:VD in boundary}, and let $\LL$ be the set of leaves of $\vd{\s}$. Note that $|\LL| = m$. 
For each $s\in \s$, we compute the funnel $\funnel{s, \s}$. 
Given a subset $R\subseteq \s$, let $\cost{R} = \sum_{r\in R} |\funnel{r, \s}|$ be the \emph{magnitude} of $R$. 
Lemma~\ref{lemma:Complexity of funnels} implies that $\cost{\s} = O(n)$, and that all these funnels can be computed in $O(n)$ time. 
To be more precise, let $\tau\geq 2$ be the constant hidden by the big $O$ notation, i.e., $\cost{\s} \leq \tau n$. 
Next, we compute a random permutation $\Pi$ of $\s$. 
Let $0 < \alpha < 1$ be some constant to be defined later. 
Let $B$ and $R$ be a partition of $\s$ such that $B$ consists of the first $\lfloor \frac{\alpha}{\tau} m\rfloor$ sites according to $\Pi$, and~$R = \s\setminus B$. 

\begin{observation}\label{obs: Complexity of B and R}
It holds that $\ex{\cost{B}} = \alpha n$ and $|B|= \lfloor \frac{\alpha}{\tau} m\rfloor \leq \alpha m$. 
\end{observation}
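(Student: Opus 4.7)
The observation has two pieces, both following from elementary properties of a uniformly random permutation, so the plan is essentially to unpack the definitions and apply linearity of expectation.

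The bound on $|B|$ is immediate from the definition: by construction $|B| = \lfloor \frac{\alpha}{\tau} m \rfloor$, and since $\tau \geq 2 \geq 1$ we have $\frac{\alpha}{\tau} \leq \alpha$, so $|B| \leq \frac{\alpha}{\tau} m \leq \alpha m$. No randomness is involved in this bound.

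For the expectation, the first step is to observe that because $\Pi$ is a uniformly random permutation of $\s$ and $B$ consists of its first $k := \lfloor \frac{\alpha}{\tau} m \rfloor$ elements, the set $B$ is a uniformly random subset of $\s$ of size $k$. In particular, for every site $s \in \s$, the probability that $s$ lies among the first $k$ entries of $\Pi$ is exactly $k/m$. Then, by linearity of expectation,
\[
\ex{\cost{B}} \;=\; \sum_{s\in \s} \Pr[\,s\in B\,]\cdot |\funnel{s,\s}| \;=\; \frac{k}{m}\sum_{s\in \s}|\funnel{s,\s}| \;=\; \frac{k}{m}\,\cost{\s}.
\]
Using $k \leq \frac{\alpha}{\tau}m$ and the bound $\cost{\s} \leq \tau n$ from Lemma~\ref{lemma:Complexity of funnels} (with $\tau$ the absolute constant hidden in the $O(\cdot)$ notation), this gives $\ex{\cost{B}} \leq \frac{\alpha}{\tau}\cdot \tau n = \alpha n$, matching the claimed bound.

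There is no real obstacle here; the only thing to be slightly careful about is the direction of the inequality hidden in the big-$O$ notation, which is why $\tau$ was singled out explicitly just before the observation. The statement in the observation should be read as an upper bound on the expected magnitude of $B$, and it falls out of the two facts that (i)~$B$ is a uniformly random $k$-subset of $\s$ and (ii)~the total magnitude $\cost{\s}$ is at most $\tau n$.
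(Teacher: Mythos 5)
Your proof is correct and is exactly the intended argument; the paper states this as an observation without proof, and the linearity-of-expectation computation together with the explicit constant $\tau$ from Lemma~\ref{lemma:Complexity of funnels} is precisely what is meant. Your remark that the stated equality $\ex{\cost{B}} = \alpha n$ should really be read as the upper bound $\ex{\cost{B}} \leq \alpha n$ is also right, and is consistent with how the observation is used later (e.g., to bound the expected size of $V_B$ by $\alpha n$).
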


We would like to recursively compute a Voronoi-like diagram of the sites in $B$ while forgetting for a while of the red sites. 
Once we have this recursively computed diagram, we perform a randomized incremental construction of $\vd{\s}$ by inserting the sites of $R$ in the random order according to permutation $\Pi$.
In the next section we discuss the recursive call to compute a diagram for $B$, and later spend Section~\ref{section: Insertion process} detailing the insertion process.

\subsection{A smaller polygon}\label{section:Smaller Polygon}
While it would be great to compute $\vd{B}$, this may be too expensive as the diagram can have large complexity, and we need our recursive call to have smaller complexity (a constant fraction reduction in the size). 
Thus, we would not compute $\vd{B}$ exactly, but we will compute an ``approximation'' of it. 
Notice that we can see $\vd{B}$ as the upper envelope of the $w$-distances $\dd{b}{x}$. 
Because these functions have a complexity that depends on the size of the polygon, we need to simplify them.
To achieve this, for a site $b\in B$, this simpler distance function will be completely accurate inside of $\cell{b, \s}$. 
However, for any point $x$ outside of $\cell{b, \s}$, this new distance from $s$ to $x$ will be only upper bounded by $\dd{b}{x}$. 
That is, distances from $b$ can only get shorter, and only outside of $\cell{b, \s}$. 

To define these new distance functions, we define a new polygon $Q$ of lower complexity than $P$ (although $P\subseteq Q$).
Let $V_B$ be the set consisting of all vertices of $P$ that belong to the funnel $\funnel{b, \s}$ of some $b\in B$.
By Observation~\ref{obs: Complexity of B and R}, we know that the expected size of $V_B$ is at most $\alpha n$. 
Note that we could repeat the construction of $B$ and $R$ an expected constant number times, until we guarantee that $V_B\leq \alpha n)$.
Let $V_C$ be the set of convex vertices of~$P$. Recall that by our assumption that the simplification transform has already been applied to~$P$, 
we know by Lemma~\ref{section:Simplification} that $V_C$ consists of the union of $\s$ and $\LL$, where $\LL$ is the set of leaves of $\vd{\s}$, i.e., $|V_C| \leq 2m$.
Let $Q$ be a polygon defined as follows. 
Imagine the boundary of $P$ being a rubber band, and each vertex of $V_B\cup V_C$ being a pin.
By letting the rubber band free while keeping it attached at the pins, this rubber band snaps to a closed curve defining a weakly simple polygon $Q$; see Figure~\ref{fig:Reduction}.

\begin{figure}[t]
\centering
\includegraphics{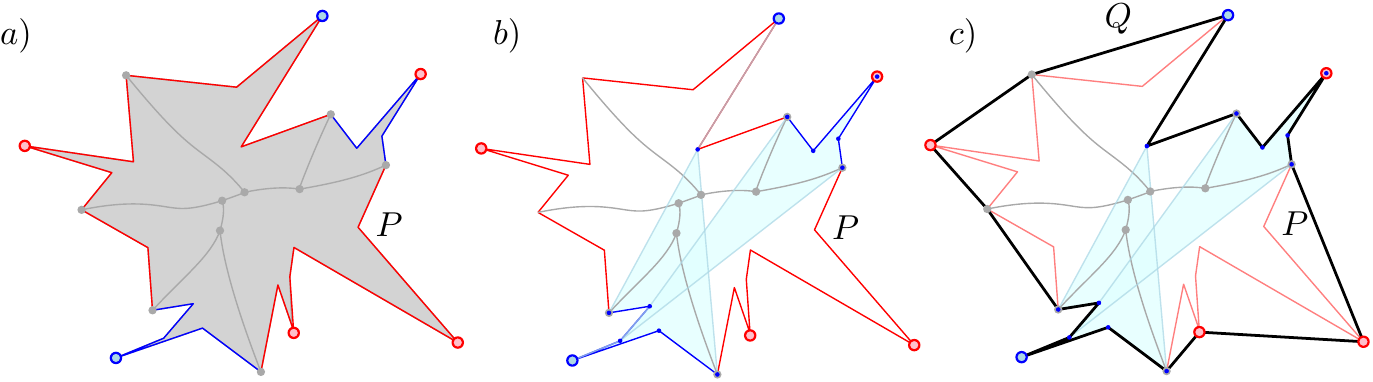}
%[width=1\textwidth]
\caption{$a)$ The polygon obtained from the simplification transform, and the decomposition of its boundary into red and blue chains.
$b)$ The funnels of the sites in $B$ are depicted, as well as all vertices in $V_B$.
$c)$ The polygon $Q$ is obtained by taking a rubber band and keeping attached at all vertices in $V_B\cup V_C$ and letting it snap. }
\label{fig:Reduction}
\end{figure}

\begin{restatable}{lemma}{PropertiesOfQ}\label{lemma:PropertiesOfQ}
%$[*]$ 
The polygon $Q$ contains $P$, is weakly simple, can be computed in expected $O(n)$ time, and has at most $\alpha n + 4m$ vertices.
Moreover, for $x,y\in P$, it holds that $\g[Q]{x}{y} \leq \g{x}{y}$. 
In particular, for each $b\in B$ and $x\in P$, $\dd[Q]{b}{x} \leq \dd{b}{x}$, and if $x\in \cell{b, \s}$, then $\dd[Q]{b}{x} = \dd{b}{x}$.
\end{restatable}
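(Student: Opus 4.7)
The plan is to verify each claim in turn, using the fact that $Q$ is obtained by ``shortcutting'' those subchains of $\partial P$ whose interior vertices lie outside the pin set $V_B \cup V_C$.

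First, by Lemma~\ref{lemma:Properties of simplification} the convex vertices of $P$ all lie in $V_C$, so between any two consecutive pins along $\partial P$ every intermediate vertex is reflex. The rubber-band snap replaces such a reflex subchain by a shorter chain whose vertices are pins. Because a reflex chain bulges toward $\interior{P}$, snapping it moves the boundary away from $\interior{P}$, which gives $P \subseteq Q$; weak simplicity holds because each snapped chain is pulled taut and cannot cross other parts of $\partial P$ without sharing a pin. For the size bound, every vertex of $Q$ belongs to $V_B \cup V_C$; by Observation~\ref{obs: Complexity of B and R} one can ensure $|V_B| \leq \alpha n$ after expected $O(1)$ re-samplings, and by Lemma~\ref{lemma:Properties of simplification} $|V_C| = |\s \cup \LL| \leq 2m$. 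Allowing each pin in $V_C$ to be traversed up to twice on the weakly simple boundary $\partial Q$ gives at most $\alpha n + 4m$ vertices. The running time is expected $O(n)$: compute $\bvd{\s}$ and all funnels $\funnel{s, \s}$ in $O(n)$ time via Theorem~\ref{thm:VD in boundary} and Lemma~\ref{lemma:Complexity of funnels}, extract $V_B$ from the funnels of $B$ (expected size $\alpha n$), and perform all snaps in a single traversal of $\partial P$.

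The inequality $\g[Q]{x}{y} \leq \g{x}{y}$ is immediate from $P \subseteq Q$ (any path in $P$ is a path in $Q$), which directly yields $\dd[Q]{b}{x} \leq \dd{b}{x}$. The main obstacle is the equality $\dd[Q]{b}{x} = \dd{b}{x}$ when $x \in \cell{b, \s}$. By Lemma~\ref{lemma:Voronoi cell in funnel}, $x \in \funnel{b, \s}$, and since the funnel is a geodesically convex subset of $P$, the geodesic path $\p{b}{x}$ lies entirely inside $\funnel{b, \s}$ and bends only at vertices of this funnel. By the construction of $V_B$, every such vertex is a pin and is therefore preserved as a vertex of $Q$.

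The final step is to show $\p{b}{x}$ is also locally taut in $Q$. At each bending vertex $v$ on $\p{b}{x}$, the snapping only replaces reflex subchains adjacent to $v$ by shortcuts that lie outside $\interior{P}$, so the interior angle of $Q$ at $v$ is at least the interior angle of $P$ at $v$; thus $v$ remains a reflex vertex of $Q$ on the same side, and the path continues to hug $v$ from the correct direction. Since a locally taut polygonal path between two points of a weakly simple polygon is the unique geodesic, $\p[Q]{b}{x} = \p{b}{x}$, hence $\g[Q]{b}{x} = \g{b}{x}$ and the required equality of weighted distances follows. The delicate point is verifying this local reflex orientation at each bending pin, which ultimately reduces to checking that the two funnel edges incident to the pin are unaffected by the snap because they lie within $P$ while the snap only enlarges regions outside $P$.
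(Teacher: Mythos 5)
Your proof takes the same route as the paper's (relative hull of the pins; $P\subseteq Q$ gives the inequality; preservation of the funnel vertices gives the equality on $\cell{b,\s}$), and on the equality you are in fact more careful than the paper, which only asserts that the vertices of $\p{b}{x}$ survive in $Q$; your check that each bend vertex retains a (smaller) exterior wedge on the concave side of the bend, so that the path stays locally taut in $Q$, is precisely the detail the paper leaves implicit.

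The one genuine gap is in the claim that $Q$ can be built by ``a single traversal of $\partial P$.'' The chain replacing a pocket between two consecutive pins is a geodesic in the \emph{complement} of $P$ (with $P$ as an obstacle), and it is not determined by that pocket alone: it may have to wrap around convex portions of $\partial P$ that are far away in boundary order but geometrically inside the pocket's outer hull (e.g.\ in a spiral-shaped polygon). A local, Graham-scan-style convexification of each reflex subchain can therefore output a curve that crosses $\interior{P}$. The paper avoids this by enclosing $P$ in a box attached to $P$ by a narrow corridor, so that the complement becomes a simple polygon $P^\circ$, and then computing the geodesic hull of $V_B\cup V_C$ inside $P^\circ$ in $O(n)$ time via Lemma~\ref{lemma:Geodesic hull computation}. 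The same observation --- that these geodesics bend only at reflex vertices of $P^\circ$, i.e.\ at convex vertices of $P$ --- is also what justifies your vertex count: it is why only elements of $V_C$ can appear twice on $\partial Q$ while elements of $V_B\setminus V_C$ appear once, giving $|V_B|+2|V_C|\leq \alpha n+4m$. You assert this multiplicity bound but do not derive it; with the $P^\circ$ construction in hand it falls out immediately.
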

%\begin{proof}[Sketch proof]
%The boundary of $Q$ can be constructed by connecting consecutive  points in $V_B$ by geodesics contained in the complement of $P$, i.e., in a domain that has $P$ as a hole or obstacle ($Q$ is also known as the relative hull of $V_B$ in this domain). Only convex vertices of $P$ can be in these paths, and they can be visited only twice.
%Therefore, $Q$ consists of at most $|V_B| + 2m = \alpha n + 2m$ reflex vertices; see full proof in Appendix. 
%\end{proof}
%\newcommand{\ProofPropertiesOfQ}{
%\PropertiesOfQ*
\begin{proof}
The boundary of $Q$ can be constructed by connecting consecutive  points in $V_B\cup V_C$ by geodesics contained in the complement of $P$, i.e., in a domain that has $P$ as a hole or obstacle ($Q$ is also known as the relative hull of $V_B\cup V_C$ in this domain). 
Note however that we do not need to deal with geodesics in domains with holes. 
We can take a box sufficiently large to enclose $P$, and connect it using a small corridor with the left most vertex of $P$ to create a new polygon $P^\circ$ that essentially is a box with $P$ as a ``hole''; see Figure~\ref{fig:ConstructingQ}. 
Using this polygon that can be constructed in linear time, we can compute the boundary of $Q$ in $O(n)$ time~\cite[Lemma 2]{kpairpath}.

Note that the reflex vertices in $P$ become convex in $P^\circ$ and vice versa. 
Therefore, the geodesic paths in $P^\circ$ connecting consecutive vertices of $V_B\cup V_C$ consist only of reflex vertices of $P^\circ$ other than their endpoints, i.e., they consist only of vertices of $V_C$.  
Moreover, each vertex of $V_C$ can be visited only twice by the boundary of $Q$, and hence, it creates at most one new reflex vertex of $Q$.
Therefore, $Q$ consists of at most $|V_B| +  2|V_C| \leq \alpha n + 4m$ vertices.

For $x,y\in P$, the geodesic $\p{x}{y}$ is also contained in $Q$. Thus, their geodesic distance in $Q$ can only get shorter, i.e., $\g[Q]{x}{y} \leq \g{x}{y}$.
Moreover, for some $b\in B$, notice that $V_B$ contains all vertices of $\funnel{b, \s}$. Therefore, for $x\in \cell{b,\s}$, all vertices in the geodesic $\p{b}{x}$ are still on $Q$, i.e., $\dd[Q]{b}{x} = \dd{b}{x}$. Moreover, if $x\in P$, but not in $\cell{b,\s}$, then the path to $x$ can only become shorter, i.e., $\dd[Q]{b}{x} \leq \dd{b}{x}$.
\end{proof}
%}

\begin{figure}[t]
\centering
\includegraphics{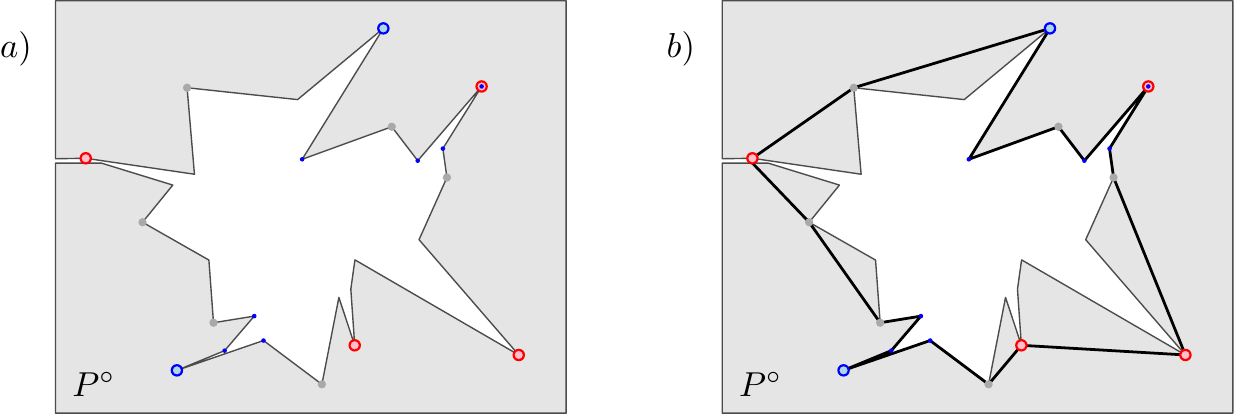}
%[width=1\textwidth]
\caption{The algorithmic construction of $Q$ described in the proof of Lemma~\ref{lemma:PropertiesOfQ}.}
\label{fig:ConstructingQ}
\end{figure}

Our plan is now to compute the FVD of $B$ in the new polygon $Q$, and then use this as a ``good'' approximation of $\vd{B}$ in $P$. 
With good, we mean that the red sites can be randomly inserted in this diagram, and that the result of this whole process is indeed $\vd{\s}$. 
We will prove this properties in the next section, but for now, we focus on describing the recursive algorithm.

%By Lemma~\ref{lemma:PropertiesOfQ}, $Q$ has at most $\alpha n + 2m$ reflex vertices.
%Since $m$ could be large (up to $n$), we need to further refine $Q$ before being able to apply recursion. 
%To this end, we just apply the simplification transform on $Q$ and the sites in~$B$. 
%In this way, we obtain a new polygon $G$ and a new set of sites $\A$ such that the Voronoi trees of $\vd[Q]{B}$ and $\vd[G]{\A}$ coincide.
%Recall that to compute $G$, we compute $\bvd[Q]{B}$ and obtain the set $\LL[B,Q]$ of leaves of $\vd[Q]{B}$. 
%Then, we look at the clockwise order of the points in $\LL[B,Q]\cup \A$ along $\partial P$, and compute all geodesic between consecutive points according to this order. 
%The union of these geodesic paths forms the boundary of $G$. 
%Thus, the convex vertices in $G$ are exactly the points in $\LL[B,Q]\cup \A$. 
%Since $|\LL[B,Q]| = |B| = |\A| \leq \alpha m$, we conclude that $G$ consists of at most $\alpha n + 2m$ reflex vertices (the reflex vertices in $Q$) and at most $|\A| + |\LL[B,Q]|\leq 2\alpha m$ convex vertices. 
Let $I(n,m)$ be the time to insert back the sites of $R$ and obtain $\vd{\s}$ after having recursively computed $\vd[Q]{B}$.
Because $Q$ consists of at most $\alpha n + 4m$ vertices by Lemma~\ref{lemma:PropertiesOfQ}, and since $|B|\leq \alpha m$ by Lemma~\ref{obs: Complexity of B and R}, we get a recursion of the form $T(n, m) = T(\alpha n + 4m, \alpha m) + I(n,m)$ for the running time of our algorithm. 
We claim that $I(n,m) = O(n+m)$, and we prove it in the next couple of sections.
However, for $T(n,m)$ to solve to $O(n+ m)$, we need to look at one more iteration of the recursion, as it can be that $\alpha n + 4m$ is not really smaller than $n$ if $m$ is large. 
Fortunately, because $T(\alpha n + 4m, \alpha m) = T(\alpha(\alpha n + 4m) + 4\alpha m, \alpha^2 m) + I(\alpha n + 4m, \alpha m)$, and by our assumption on the running time of $I(n,m)$, we get that
\[T(n,m) = T(\alpha^2 n + 8\alpha m, \alpha^2 m) + O(n+m) + O(\alpha n + 4m + \alpha m).\]

By choosing the constant $\alpha$ sufficiently small, and since we assume that $m\leq n$, we can guarantee that $T(n, m) \leq T(n/2, m/2) + O(n + m) = O(n +m)$ proving the main result of this paper.
Therefore, it remains only to show that $I(n,m)$ is indeed $O(n+ m)$, i.e., in linear time we can insert back the red sites, and obtain the FVD $\vd{\s}$ from the recursively computed diagram of $\vd[Q]{B}$.

\subsection{Preprocessing the red sites}\label{sec: Preprocessing of red sites}
Before going into the insertion process of the sites in $R$, we need to finish some preprocessing on them.
To be able to insert these sites efficiently, we need to have a representation of the $w$-distance of each $r\in R$ defined on a sufficiently large superset of $\cell{r, \s}$. 

Note that we cannot define these distance functions in the entire polygon, otherwise we are spending already too much time and space.
On the other hand, if its representation is too narrow, then during the insertion it might be that the distance information is insufficient.

Recall that $\LL$ denotes the set of leaves of $\vd{\s}$.
Color the leaves in $\LL$ \emph{purple} if they bound the Voronoi cell of a site in $B$ and a site in $R$; see Figure~\ref{fig:RedPreprocessing}.

\begin{figure}[ht]
\centering
\includegraphics{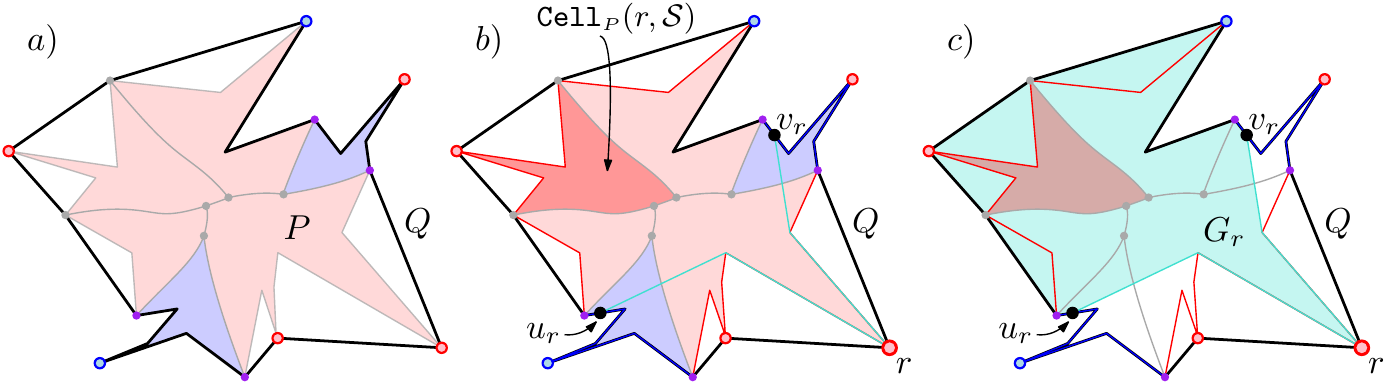}
%[width=1\textwidth]
\caption{$a)$ The coloring of the purple leaves of $\LL$.
$b)$ For a site $r\in R$, the construction of $u_r$ and $v_r$ lying inside blue Voronoi cells.
$c)$ The polygon $G_r$ where we compute the SPM of $r$.}
\label{fig:RedPreprocessing}
\end{figure}

%We then partition $\partial P$ by splitting at each purple point (the purple point belongs to both sides of the split). 
%In this way, we obtain a collection of maximal polygonal chains without interior purple points called \emph{$B$-$R$-chains}.
%We group the red sites into blocks as follows. A site $r\in R$ belongs to the \emph{block} $R(C)$ of a $B$-$R$-chain $C$ if $\bcell{r, \s}$ is contained in $C$.

Recall that given a polygon $K$ and two points $x$ and $y$ on $\partial K$, $\partial K(x,y)$ denotes be the polygonal chain that starts at $x$ and follows the boundary of $K$ clockwise until reaching~$y$.
For each $r\in R$, let $u_r$ and $v_r$ respectively be the first purple leaves of $\LL$ reached from any point in $\bcell{r, \s}$ when walking counterclockwise and clockwise along $\partial P$. 
We then move $u_r$ and $v_r$ slightly clockwise and counterclockwise, respectively, so that they both sit inside of a blue Voronoi cell. 
Moreover, we know that $\bcell{r,\s}$ is contained in the interior of the path $\partial P(u_r,v_r)$. 
We define a new polygon where we define the $w$-function of $r$ as follows.
Because both $u_r$ and $v_r$ lie on $\partial P$ and on Voronoi cells of blues sites, we know that $u_r$ and $v_r$ are both on $\partial Q$. 
Let $G_r$ be a new weakly simple polygon bounded by the paths $\p{r}{u_r}, \partial Q(u_r, v_r)$ and $\p{v_r}{r}$. 
Notice however that the paths $\p{r}{u_r}$ and $\p{v_r}{r}$, called the \emph{walls}, are defined within the polygon $P$, while the other path bounding $G_r$ is contained in $\partial Q$. 
Since $P\subset Q$, we know that $G_r\subseteq Q$. 
Moreover, the funnel $\funnel{r\to \partial P(u_r, v_r)}$ is contained in $G_r$; see Figure~\ref{fig:RedPreprocessing}.

\begin{restatable}{lemma}{SPMForSitesR}\label{lemma:SPMForSitesR}
%$[*]$ 
Given $r\in R$, it holds that $\cell{r, \s}\subseteq G_r$. 
Moreover, it holds that 
\[ \ex{\sum_{r\in R} |G_r|} = O(n)\text{ and }\bigcup_{r\in R} G_r = Q.\]
\end{restatable}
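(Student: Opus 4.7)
The plan is to prove the three claims separately: the inclusion $\cell{r,\s}\subseteq G_r$, the expected-sum bound, and the equality $\bigcup_{r\in R} G_r=Q$.

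For the inclusion, I plan to apply Lemma~\ref{lemma:Voronoi cell in funnel} to get $\cell{r,\s}\subseteq \funnel{r,\s}=\funnel{r\to \bcell{r,\s}}$; since $u_r,v_r$ sit just past $\bcell{r,\s}$ inside the adjacent blue cells, $\bcell{r,\s}$ is a sub-chain of $\partial P(u_r,v_r)$, so $\funnel{r\to \bcell{r,\s}}\subseteq \funnel{r\to \partial P(u_r,v_r)}$. The latter funnel is enclosed by the walls $\p{r}{u_r}$ and $\p{v_r}{r}$ together with $\partial P(u_r,v_r)$, and since $P\subseteq Q$ by Lemma~\ref{lemma:PropertiesOfQ}, the chain $\partial P(u_r,v_r)$ lies inside the closed region bounded by the same walls and $\partial Q(u_r,v_r)$, giving $\funnel{r\to \partial P(u_r,v_r)}\subseteq G_r$ and the desired inclusion.

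For $\bigcup_{r\in R} G_r = Q$, the inclusion $\subseteq$ is immediate because each $G_r$ is bounded by paths in $P\cup\partial Q\subseteq Q$. For the reverse, my plan is to show that the walls of the red sites, together with the $\partial Q$-chains adjacent to the maximal red arcs, form a complete subdivision of $Q$ into faces, each of which is a single $G_r$. The structural fact to verify is that every blue arc on $\partial P$ is flanked by red arcs whose walls emanate into the interior of $Q$ and meet to close off any would-be uncovered region.

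For the expectation, split $|G_r|=|\p{r}{u_r}|+|\partial Q(u_r,v_r)|+|\p{v_r}{r}|$ and handle walls and $\partial Q$-chain separately. For the walls, I group by red arc: within a red arc $A$ with $k_A$ red sites, the paths $\{\p{r}{u_A}\}_{r\in A}$ share the common apex $u_A$ and so form a subtree of the shortest-path tree rooted at $u_A$, of complexity $O(k_A+n_A)$ where $n_A$ is the number of polygon vertices in the associated funnel-region; since funnel-regions of distinct arcs are spatially essentially disjoint, $\sum_A n_A=O(n)$ and $\sum_A k_A=|R|=O(m)$, giving a deterministic $O(n+m)$ total wall complexity. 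For the $\partial Q$-chain, I claim $|\partial Q(u_A,v_A)|=O(k_A)$ because the pins in $V_B\cup V_C$ contributing to this chain are exactly the red sites of $A$ and the leaves between them (blue funnels live inside blue cells and contribute no interior pins here). Since all $r\in A$ use the same chain, $\sum_{r\in R}|\partial Q(u_r,v_r)|=\sum_A k_A\cdot|\partial Q(u_A,v_A)|=O(\sum_A k_A^2)$, and under the random partition the coloring of boundary cells is Bernoulli with constant red probability, so a direct calculation on the expected number of ordered pairs of red sites in a common arc gives $\ex{\sum_A k_A^2}=O(m)$ and hence $\ex{\sum_r |G_r|}=O(n+m)=O(n)$.

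The hardest part will be the covering claim $\bigcup_r G_r=Q$: it is not immediately obvious that the walls close off all regions correctly, so I expect to spend the bulk of the effort on a careful topological sweep of $\partial Q$ certifying that no face of the subdivision is omitted. The $\partial Q$-chain bound is the other subtle point: the deterministic quantity $\sum_A k_A^2$ can be as large as $\Theta(m^2)$ in the worst case, so the analysis genuinely relies on the geometric-size distribution of red arcs induced by the random partition, and controlling the dependence between the random arc structure and the chain lengths is where the key randomized argument lives.
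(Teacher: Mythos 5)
Your first claim ($\cell{r,\s}\subseteq G_r$) is argued exactly as in the paper, and your treatment of the covering claim, while vaguer, is aiming at the same picture the paper uses (the slabs between $\partial P(u_r,v_r)$ and $\partial Q(u_r,v_r)$ account for $Q\setminus P$; this is in fact the \emph{easy} part, not the hard one). The genuine problems are in your bound on $\ex{\sum_{r\in R}|G_r|}$, where you deviate from the paper and two of your key steps fail. First, for the walls: the fact that $\{\p{r}{u_A}\}_{r\in A}$ forms a subtree of the shortest-path tree rooted at $u_A$ of complexity $O(k_A+n_A)$ bounds the complexity of the \emph{union} of these paths, but $\sum_{r\in R}|G_r|$ counts each wall with multiplicity. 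If all $k_A$ paths share a long common prefix of $\Theta(n_A)$ reflex vertices, the sum is $\Theta(k_A n_A)$, so your ``deterministic $O(n+m)$ total wall complexity'' is false; randomness is indispensable for the walls too, not only for the $\partial Q$-chains. Second, the claim $|\partial Q(u_A,v_A)|=O(k_A)$ is unjustified: $\partial Q$ is pinned at \emph{every} vertex of $V_B\cup V_C$, and $V_B$ consists of all vertices of blue funnels. A blue funnel $\funnel{b,\s}$ is a \emph{superset} of $\cell{b,\s}$ (Lemma~\ref{lemma:Voronoi cell in funnel}), not a subset, and the reflex vertices on its walls can lie anywhere on $\partial P$, including on the red stretch $\partial P(u_A,v_A)$. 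So that chain can carry $\Theta(n_A)$ pins from $V_B$, and your reduction to $\ex{\sum_A k_A^2}=O(m)$ does not go through; you would again face a product $k_A\cdot n_A$ with correlated factors.

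The paper's route avoids both issues by bounding everything through a single object: for each $r$ it takes the transition chain $\partial P(u_r,v_r)$ (its endpoints lie in cells of \emph{different} blue sites), forms the geodesic hull $H_r$ of this chain together with all sites whose cells meet it, and invokes Lemma~3.6 of Ahn et al.\ to get $\sum_{H\in\mathcal H}|H|=O(n)$ over the \emph{distinct} hulls. The walls of $G_r$ live inside $H_r$ and the chains $\partial Q(u_r,v_r)$ tile $\partial Q$ up to the same multiplicity, so the whole bound reduces to showing that each hull is charged by $O(1)$ red sites in expectation, i.e., $\textsc{E}|R_r|=O(1)$. That follows from the observation that a site $s$ contributes to $R_r$ only if all $\delta(r,s)$ sites between $r$ and $s$ are red, an event of probability at most $\left(\frac{\tau-\alpha}{\tau}\right)^{\delta(r,s)}$, and summing the geometric series. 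Your instinct that ``the key randomized argument lives'' in controlling arc sizes is right, but it must be applied to the full complexity $|H_r|$ (walls and chain together), not just to the count $k_A$ of red sites per arc.
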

%\begin{proof}[Proof sketch]
%\end{proof}

%\newcommand{\ProofSPMForSitesR}{
%\SPMForSitesR*  
\begin{proof}
Recall that by Lemma~\ref{lemma:Voronoi cell in funnel}, we know that $\cell{r, \s}$ is contained in the funnel $\funnel{r, \s}$.
Moreover, $\funnel{r, \s}$ is contained in $\funnel{r\to  \partial P(u_r, v_r)}$, which in term is contained in $G_r$ by construction.
Therefore, $\cell{r, \s}\subseteq G_r$ as claimed. 

%Notice that $\bcell{r, \s} = \bcell{r, B\cup R(C)}$ as all the sites limiting the boundary of $\bcell{r,\s}$ are in $B\cup R(C)$. 
%Thus, Lemma~\ref{lemma:Voronoi cell in funnel} implies that $\cell{r, B\cup R(C)} \subseteq \funnel{r, B\cup R(C)} = \funnel{r \to \bcell{r, B\cup R(C)}}$.
%Moreover, because $C$ contains $\bcell{r, \s} = \bcell{r, B\cup R(C)}$, and since $G_C$ is geodesically convex and contains $r$, 
%we conclude that $\cell{r, B\cup R(C)} \subseteq \funnel{r, B\cup R(C)} \subseteq G_C$ as claimed.

We use some concepts and results introduced by Ahn et al.~\cite[Section 3]{ahn2015linear}.
A \emph{transition chain} is a polygonal chain contained in $\partial P$ such that its endpoints belong to different Voronoi cells. 
Because  $u_r$ and $v_r$ lie in the Voronoi cells of different blues sites, 
we get that $\partial P(u_r, v_r)$ is a transition chain. 
Let $R_r$ be the set of sites in $\s$ whose Voronoi cell intersects~$\partial P(u_r, v_r)$. 
In particular, $R_r$ contains $r$, two blue sites, and it might contain other sites of $R$.
Let $H_r$ be the geodesic hull of $\partial P(u_r, v_r) \cup R_r$. 
Note that $\funnel{r\to \partial P(u_r, v_r)} \subseteq H_r$, and that both paths $\p{r}{u_r}$ and $\p{r}{v_r}$ are shared by the boundaries of $\funnel{r\to \partial P(u_r, v_r)}$ and $G_r$.
We call these paths the \emph{walls} of these funnels. 
Recall that the third path of $G_r$ is $\partial Q(u_r, v_r)$, and notice that $\cup_{r\in R} \partial Q(u_r, v_r) = \partial Q$. 
Because $|Q| = O(n)$, to upper bound the total complexity of all the polygons $G_r$, 
it remains only to upper bound the complexity of their walls. 
That is, it remains to upper bound the complexity of all polygons $H_r$. 
To this end, let $\mathcal H = \{H_r : r\in R\}$. Note that since this is a set, if the same polygon is the same for many sites in $R$, it is counted only once in $\mathcal H$.
Because each $\partial P(u_r, v_r)$ is a transition chain, a result from Ahn et al.~\cite[Lemma 3.6]{ahn2015linear} implies directly that $\sum_{H\in \mathcal H} |H| = O(n)$ (the hourglass defined on the transition chain $\partial P(u_r, v_r)$ is a superset of $H_r$). 
Therefore, the only thing that remains is to show that each polygon $H\in \mathcal H$ appears with $O(1)$ multiplicity in expectation, or equivalently, that the expected size of $R_r$ is constant.

To show that the expected size of $R_r$ is $O(1)$, fix a site $r\in R$. 
For each $s\in \s$, let $\chi^r_s$ be an indicator random variable (\emph{i.r.v.}) that is one if and only if $s$ is belongs to~$R_r$.
Note that for $\chi^r_s$ to be one, all sites between $r$ and $s$ along $\partial P$ must be in $R$.
For a site $s\in \s$, let $rank(s)$ denote its rank in the permutation $\Pi$. 
Note that the probability that $s$ appears after $r$ according to $\Pi$ is $\frac{m-rank(r)}{m}$.
Thus, $Pr[\chi^r_s = 1] \leq 2\left(\frac{m-rank(r)}{m}\right )^{\delta(r,s)}$, where $\delta(r,s)$ is the number of sites of $\s$ visited when going clockwise from $r$ to $s$ along $\partial P$. 
The two factor comes from walking clockwise and counterclockwise from $r$ along $\partial P$, both directions are symmetric.
Note however that since $r\in R$, $rank(r) > \frac{\alpha}{\tau} m$ by Observation~\ref{obs: Complexity of B and R}.
Thus, $Pr[\chi^r_s = 1] \leq 2\left(\frac{\tau - \alpha}{\tau}\right)^{\delta(r,s)}$.
By definition, we know that $|R_r| = \sum_{s\in \s} \chi^r_s$.
Applying expectations, we get that $\exw{|R_r|} \leq 2 \sum_{s\in \s} \left(\frac{\tau - \alpha}{\tau}\right)^{-\delta(r,s)}$. 
Note that we can order the sites of $\s$ according to their distance $\delta(r, s)$.
Thus, we can rewrite the summation as \linebreak
$\exw{ |R_r| } \leq 2 \sum_{i = 0}^{m-1} \left(\frac{\tau - \alpha}{\tau}\right)^{-i}  = O(1).$
Consequently, we conclude that $ \ex{\sum_{r\in R} |G_r|} = O(n)$.

It remains only to show that $\bigcup_{r\in R} G_r = Q$. To this end, note that $\cell{r, \s}$ is contained in $G_r$, and hence, the union of all the $G_r$'s covers all the Voronoi cells in $\vd{\s}$.
That is, $P$ is contained in $\bigcup_{r\in R} G_r$. We show now that $Q\setminus P$ is also contained in it. 
Recall that to define $G_r$, we use $\partial Q(u_r, v_r)$ instead of $\partial P(u_r, v_r)$ to close the boundary of $G_r$. The space between $\partial Q(u_r, v_r)$ and $\partial P(u_r, v_r)$ encloses a region of $Q\setminus P$, say $W_r$. Moreover, if we take the union of all these regions, then we get that $Q\setminus P = \cup_{r\in R} W_r$ proving our result.
\end{proof}

The last technical detail is the structure used to store the $w$-distances from the sites in $R$.
Let $s\in \s$ and let $H\subseteq P$ be a subpolygon such that $s\in H$.
The \emph{shortest-path map} (or \emph{SPM} for short) of $s$ in $H$ is a subdivision of $H$ into triangles such that the geodesic path to all the points in one triangle has the same combinatorial structure. 
By precomputing the geodesic distance to each vertex of $H$, we get a constant-sized representation of the $w$-distance from $s$ inside each triangle (for more information on shortest-path maps refer to~\cite{guibasShortestPathTree}). We know also that the SPM of $s$ in $H$ can be computed in $O(|H|)$ time~\cite{chazelle1991triangulating,guibasShortestPathTree}. 

Using this SPM's, we describe our $w$-distances as follows.
For each $r\in R$, we compute the SPM of $r$ in $G_r$.
Because the complexity of this SPM is $O(|G_r|)$, we conclude that the total expected complexity of all these SPM's is $\ex{\sum_{r\in R} |G_r|} = O(n)$ by Lemma~\ref{lemma:SPMForSitesR}.

\section{Inserting back the red sites}\label{section: Insertion process}

After computing $\vd[Q]{B}$ recursively, we would like to start the randomized incremental construction of sites of $R$. 
But first, we should specify how do we store $\vd[Q]{B}$. 
%Let $s\in \s$ and let $H\subseteq P$ be a subpolygon such that $s\in H$.
%The \emph{shortest-path map} (or \emph{SPM} for short) of $s$ in $H$ is a subdivision of $H$ into triangles such that the geodesic path to all the points in one triangle has the same combinatorial structure. 
%By precomputing the geodesic distance to each vertex of $H$, we get a constant-sized representation of the geodesic distance function from $s$ inside each triangle (for more information on shortest-path maps refer to~\cite{guibasShortestPathTree}). We know also that the SPM of $s$ in $H$ can be computed in $O(|H|)$ time~\cite{chazelle1991triangulating,guibasShortestPathTree}. 

%Let $\triangle$ be an apexed triangle in this partition and let $a$ be its \emph{apex}, i.e., the vertex of $\triangle$ with smaller geodesic distance to $s$. 
%It holds then that for each $x\in \triangle$, $\p{s}{x} = \p{s}{a}\cup \p{a}{x}$. 
%In addition, the geodesic distance from $s$ to the apex of each apexed triangle is precomputed. 
%Thus, after locating a point $x$ inside an apexed triangle  $\triangle$, we can compute $\dd{s}{x} = \dd{s}{a} + |ax|$ in $O(1)$ time  
%We know also that the SPM of $s$ in $H$ can be computed in $O(|H|)$ time~\cite{chazelle1991triangulating,guibasShortestPathTree}.

Using the SPM's, we introduce the \emph{refined FVD} of $B$ in $Q$.
This refined FVD is a decomposition of $Q$ into constant-size cells defined as follows:
For each site $b\in B$, the Voronoi cell $\cell[Q]{b, B}$ is subdivided by the defining triangles of the SPM of $b$ in $Q$. 
That is, we take the intersection of each defining triangle $\triangle$ of the SPM of $b$ and intersect it with the Voronoi cell of $b$ to obtain a \emph{refined triangle}. 
As usual, for each point in a refined triangle, the geodesic distance is measured from its apex $a$ and added with $\dd{b}{a}$. 
Thus, each refined triangle and its distance function can be described with $O(1)$ space. We say that $b$ \emph{owns} these refined triangles. 
In other words, we have a way to describe the upper envelope of the $w$-distances of the sites in $B$ within $Q$ using a collection of constant-size refined triangles. 
%From the linear time construction of the SPM of each $b\in B$ within $Q$, and from the FVD of $B$ in $Q$, we can obtain its refined FVD in additional $O(|Q|)$ time.

Assume inductively that $\vd[Q]{B}$ is represented as a refined FVD as described above.
That is, for each site $b$ of $B$, there is a collection of refined triangles owned by $b$ which cover the entire Voronoi cell $\cell[Q]{b, B}$. 
Let $f_b:Q \to \mathbb{R}$ such that $f_b(x) = \dd[Q]{b}{x}$ for each $x\in Q$, i.e., $f_b$ is the function that maps each point to its $w$-distance to $b$ in $Q$. 
Note that the refined triangles of $b$ in the refined FVD of $\vd[Q]{B}$ provide a representation of this $w$-distance inside $\cell[Q]{b, B}$.
We say that the geodesic path $\p[Q]{b}{x}$ is the \emph{witness path} of the value of $f_b(x)$.

\begin{observation}\label{obs: f_s and distance coincide in cell}
Let $b\in B$.
Given $x\in P$, it holds that $f_b(x) \leq \dd{b}{x}$. Moreover, if $x\in \cell{b, \s}$, then $f_b(x) = \dd{b}{x}$. 
\end{observation}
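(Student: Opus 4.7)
The plan is to deduce the observation directly from Lemma~\ref{lemma:PropertiesOfQ}, which has already done the geometric heavy lifting. Recall that by construction $f_b(x) = \dd[Q]{b}{x} = w(b) + \g[Q]{b}{x}$, while $\dd{b}{x} = w(b) + \g{b}{x}$; so both inequalities reduce to comparing the unweighted geodesic lengths $\g[Q]{b}{x}$ and $\g{b}{x}$.

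First I would establish the inequality $f_b(x) \leq \dd{b}{x}$. Since $P \subseteq Q$ by Lemma~\ref{lemma:PropertiesOfQ}, the geodesic $\p{b}{x}$ in $P$ is a path from $b$ to $x$ that lies in $Q$, so it is a candidate for the shortest such path in $Q$. Hence $\g[Q]{b}{x} \leq \g{b}{x}$, and adding $w(b)$ to both sides gives the claimed inequality. (This is in fact the exact statement recorded at the end of Lemma~\ref{lemma:PropertiesOfQ}.)

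For the equality when $x \in \cell{b, \s}$, the key observation is that by Lemma~\ref{lemma:Voronoi cell in funnel} we have $\cell{b, \s} \subseteq \funnel{b, \s}$, so the geodesic $\p{b}{x}$ is contained in $\funnel{b, \s}$ and in particular its interior vertices all lie in $V_B$. Since $V_B \subseteq V(Q)$ by construction of $Q$, the polygonal chain $\p{b}{x}$ is still a path in $Q$ using only existing vertices; it cannot be shortened by any ``short-cuts'' that $Q$ might offer through $Q \setminus P$, because $\p{b}{x}$ already realises the shortest path in the larger polygon (this is the non-trivial half of Lemma~\ref{lemma:PropertiesOfQ}). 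Therefore $\g[Q]{b}{x} = \g{b}{x}$, and equality of the $w$-distances follows.

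The only subtle point, and the one I would take care to spell out, is the second step: arguing that no shorter path from $b$ to $x$ appears in $Q$ when $x$ lies in the Voronoi cell of $b$. But this is precisely the content already proved in Lemma~\ref{lemma:PropertiesOfQ} (using that $V_B$ was defined to include every vertex of every funnel $\funnel{b', \s}$ for $b' \in B$), so the observation is essentially a restatement specialised to the notation $f_b$.
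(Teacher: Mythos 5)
Your proposal is correct and matches the paper's intent exactly: the paper states this as an \emph{Observation} with no separate proof precisely because, once one unfolds the definition $f_b(x) = \dd[Q]{b}{x}$, both claims are verbatim the last sentence of Lemma~\ref{lemma:PropertiesOfQ}. Your sketch of why equality holds inside $\cell{b,\s}$ (all interior vertices of $\p{b}{x}$ lie in $\funnel{b,\s}$, hence in $V_B$, hence survive in $Q$) is also the same argument the paper uses inside the proof of that lemma.
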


Let $r\in R$ and recall that $G_r$ is the polygon associated with $r$ defined in Section~\ref{section:First phase}. 
As a preprocessing, we have computed the SPM of $r$ within $G_r$.
We define a function $f_r:G_r\to \mathbb{R}$ that encodes the $w$-distances from $r$ with respect to the polygon $G_r$, instead of $Q$. 
That is, $f_r(x) = \dd[G_r]{x}{r}$ for each $x\in G_r$.
In this case we say that $\p[G_r]{r}{x}$ is a \emph{witness path} of the value of $f_r(x)$. 
Note that by Lemma~\ref{lemma:SPMForSitesR}, the functions $f_r$ jointly cover polygon $Q$. 

\begin{lemma}\label{lemma: f_r and distance coincide in cell}
Let $r\in R$.
Given $x\in P$, it holds that $f_r(x) \leq \dd{r}{x}$. Moreover, if $x\in \cell{r, \s}$ and the path $\p{r}{x}$ contains no point of $\bcell{r, \s}$ other than its endpoints, then $f_r(x) = \dd{r}{x}$. 
\end{lemma}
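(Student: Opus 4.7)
For Part~1, the plan is to exploit that $P\cap G_r$ equals the funnel $\funnel{r \to \partial P(u_r, v_r)}$: $G_r$ is obtained from that funnel by substituting its $\partial P$-side arc by $\partial Q(u_r,v_r)$ (which lies outside $P$) while keeping the walls $\p{r}{u_r}, \p{r}{v_r}$ inside $P$. Since $f_r$ is only defined on $G_r$, any $x\in P$ at which $f_r(x)$ is meaningful lies in this funnel, and since the funnel is geodesically convex in $P$ with apex $r$, the geodesic $\p{r}{x}$ stays inside it and hence inside $G_r$. That makes $\p{r}{x}$ a valid path in $G_r$ from $r$ to $x$, giving $\g[G_r]{r}{x}\le\g{r}{x}$, and therefore $f_r(x)\le\dd{r}{x}$ after adding $w(r)$ to both sides.

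For Part~2, Part~1 already supplies one inequality, so it suffices to establish $\g[G_r]{r}{x}\ge\g{r}{x}$. I would proceed by contradiction: suppose the geodesic $\pi$ from $r$ to $x$ in $G_r$ is strictly shorter than $\p{r}{x}$. Since $\p{r}{x}$ is already shortest in $P$, $\pi$ cannot lie entirely in $P$ and must traverse $W_r := G_r\setminus P$; because the walls of $G_r$ lie in $P$, every entry into $W_r$ crosses $\partial P(u_r,v_r)$. Let $p_1$ be the first point at which $\pi$ exits $P$ and $p_2$ the last point at which it re-enters. The subpaths of $\pi$ from $r$ to $p_1$ and from $p_2$ to $x$ lie in $P$, and being subpaths of a geodesic in $G_r$ that stay in $P$, they must coincide with $\p{r}{p_1}$ and $\p{p_2}{x}$.

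The heart of the proof, and the step I expect to demand the most care, is to turn this shortcut into a violation of the hypothesis. My plan is a structural analysis: the vertices of $P$ on $\partial P(u_r,v_r)$ that survive as reflex in $G_r$ are exactly those pinned in the construction of $Q$, namely $V_B\cup V_C$; every other reflex vertex of $P$ on $\partial P(u_r,v_r)$ becomes an interior point of $G_r$ and can no longer deflect a geodesic there. By the choice of $u_r,v_r$ as the first purple leaves flanking $\bcell{r,\s}$, the open arc $\partial P(u_r,v_r)$ between them consists solely of boundary cells of red sites, and its reflex vertices are neither in $V_C$ (they are not convex) nor in $V_B$ (they lie on red-site funnels, not blue ones); thus they are all flattened in $G_r$. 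Consequently the shortcut $\pi$ must save length by straightening past some unpinned reflex vertex on $\partial P(u_r,v_r)$. Using that $x\in\cell{r,\s}\subseteq\funnel{r,\s}$, the inclusion $\p{r}{x}\subseteq\funnel{r,\s}$ that comes from geodesic convexity, and the fact that the only arc of $\partial P$ lying inside $\funnel{r,\s}$ is $\bcell{r,\s}$, a careful comparison between the shortest-path maps of $r$ in $P$ and in $G_r$ localizes the straightened reflex vertex to $\bcell{r,\s}$ on $\p{r}{x}$, contradicting the hypothesis.
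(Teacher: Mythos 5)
Your Part~1 is correct and essentially matches the paper: $x$ lies in a funnel of $r$ that is geodesically convex in $P$ and contained in $G_r$, so $\p{r}{x}$ is itself a path in $G_r$ and $f_r(x)\leq \dd{r}{x}$ follows.

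Part~2 has a genuine gap, at precisely the step you yourself flag as the heart of the proof. Two problems. First, the structural claim you rely on is unjustified and in general false: you assert that the reflex vertices of $P$ on $\partial P(u_r,v_r)$ cannot lie in $V_B$ ``because they lie on red-site funnels, not blue ones.'' Membership in $V_B$ only requires lying on the funnel $\funnel{b,\s}$ of \emph{some} blue site $b$, and the walls of such a funnel are geodesics $\p{b}{\cdot}$ whose interior vertices are reflex vertices of $P$ that may sit anywhere on $\partial P$ --- including deep inside red boundary cells. So a reflex vertex of $\partial P(u_r,v_r)$ may perfectly well be pinned, and the conclusion ``thus they are all flattened in $G_r$'' does not follow. (Also, $\partial P(u_r,v_r)$ is not purely red: it contains small pieces of two blue cells, since $u_r$ and $v_r$ were pushed strictly inside blue cells.) Second, and more seriously, the decisive step --- that the reflex vertex whose flattening the shortcut $\pi$ exploits must be a vertex of $\p{r}{x}$ lying on $\bcell{r,\s}$ --- is exactly the content of the lemma, and you only announce it (``a careful comparison between the shortest-path maps \dots localizes \dots''). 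Your hypothesis constrains the $P$-geodesic $\p{r}{x}$, whereas your contradiction argument reasons about a different path $\pi$; nothing written connects $\pi$'s excursion into $W_r$ back to $\p{r}{x}$ meeting $\bcell{r,\s}$, so no contradiction is actually derived.

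For comparison, the paper argues directly rather than by contradiction: since $\p{r}{x}\subseteq \funnel{r,\s}$ and the only portions of $\partial P$ met by that funnel are $\bcell{r,\s}$ and the two walls $\p{r}{u},\p{r}{v}$, the hypothesis forces every reflex vertex of $\p{r}{x}$ onto those walls; these vertices persist as reflex vertices of $G_r$, so $\p{r}{x}$ remains taut in $G_r$ and therefore equals $\p[G_r]{r}{x}$, giving $f_r(x)=\dd{r}{x}$. If you wish to keep the by-contradiction framing, you must supply the missing bridge: show that a strictly shorter path in $G_r$ can only arise if $\p{r}{x}$ turns at a reflex vertex that is flattened in $Q$, and that such a vertex cannot lie on the walls of $\funnel{r,\s}$, forcing it onto $\bcell{r,\s}$ against the hypothesis.
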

\begin{proof}
By Lemma~\ref{lemma:Voronoi cell in funnel}, we know that if $x\in \cell{r,\s}$, then  $x$ belongs to $\funnel{r, \s}$.
Moreover, since $\funnel{r, \s}$ is contained in $G_r$, we know that 
\[ f_r(x) = \dd[G_r]{r}{x} \leq \dd[\funnel{r, \s}]{r}{x} = \dd{r}{x}.\]

If $\p{r}{x}$ contains no point of $\bcell{r, \s}$, then we claim that $\p{r}{x} = \p[G_r]{r}{x}$. 
If this claim is true, then clearly $f_r(x) = \dd[G_r]{r}{x} = \dd{r}{x}$ proving our result.
To prove our claim, notice that path $\p{r}{x}$ is contained in $\funnel{r, \s}$. Moreover, because this path contains no point of $\bcell{r, \s}$ other than its endpoints, then all reflex vertices along it must belong to the walls of $\funnel{r, \s}$. However, all these reflex vertices are also part of $G_r$. Thus, since $\funnel{r, \s}\subseteq G_r$, we conclude that $\p{r}{x} = \p[G_r]{r}{x}$ proving our claim.  
\end{proof}

%To avoid degenerate situations, we assume that for no vertex of $v$ of $Q$ it holds that $f_s(v) = f_t(v)$ for two different sites $s,t\in \s$. 
%This can be obtained by applying a slight perturbation from either the weights of the sites, or their positions~\cite{edelsbrunner1990simulation}.

Note that for each site of~$R$, we have considered their $w$-distances inside of $G_r$, while for the sites in $B$, their $w$-distances are with respect to $Q$.
Therefore, in the intermediate steps of our incremental construction, we will not have the FVD of the sites, but some Voronoi-like structure.

\subsection{The envelope}
Note that $\vd[Q]{B}$ represents already the upper envelope of the functions $f_b$ for the sites in~$B$. 
We would like to complete this envelope by incrementally inserting the functions $f_r$ for the sites in $R$.
To deal with these upper envelopes, we introduce some definitions.

Consider the order of the sites of $R$ according to the random permutation $\Pi$ used to construct $B$ and $R$.
Let $\s_0 = B$ and for each $1\leq i\leq |R|$, let $\s_i$ be the set consisting of $B$ and the first $i$ red sites according to the permutation $\Pi$.
This is the order that we use for our randomized incremental construction. 
That is, on each insertion step we would like to maintain a Voronoi-like structure for the sites in $\s_i$.

Let $s$ be a site of $\s_i$.
Given a point $x$ of $Q$, we say that $x$ is \emph{\idom} by $s$ if $f_s(x) \geq f_{s'}(x)$ for all $s'\in \s_i$.
Notice that if a point $x$ is \idom, then the witness path of $f_s(x)$ must be defined. 

\begin{lemma}\label{lemma:Shadow points}
Let $s$ be a site of $\s$.
Given a point $x$ such that $x$ is \idom by $s$, and a point $z\in Q$ such that $x$ lies on the witness path of $f_s(z)$,
it holds that $z$ is \idom by $s$. 
\end{lemma}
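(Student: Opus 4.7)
My plan is to argue by a triangle-inequality argument along the witness path of $f_s(z)$. Let $K_s$ denote the polygon in which $f_s$ is defined, so $K_s = Q$ when $s \in B$ and $K_s = G_s$ when $s \in R$. By hypothesis, $x$ lies on the witness path $\pi := \p[K_s]{s}{z}$. Since a sub-path of a geodesic in a (weakly simple) polygon is itself a geodesic in that same polygon, the portion of $\pi$ from $x$ to $z$ is $\p[K_s]{x}{z}$, and therefore
$$f_s(z) \;=\; f_s(x) + \g[K_s]{x}{z}.$$

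For any competing site $s' \in \s_i$ for which $f_{s'}(z)$ is defined, let $K_{s'}$ be its reference polygon. The triangle inequality in $K_{s'}$ (with $x, z \in K_{s'}$) gives $f_{s'}(z) \leq f_{s'}(x) + \g[K_{s'}]{x}{z}$, and combined with $f_{s'}(x) \leq f_s(x)$, which comes from the $i$-domination of $x$ by $s$, the desired conclusion $f_{s'}(z) \leq f_s(z)$ would reduce to the single geometric inequality $\g[K_{s'}]{x}{z} \leq \g[K_s]{x}{z}$.

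The main obstacle is this last reduction, since in principle a more restrictive reference polygon $K_{s'}$ could force longer detours than $K_s$. I expect to handle it by showing that the geodesic sub-path $\p[K_s]{x}{z}$ is itself contained in $K_{s'}$, providing a valid curve in $K_{s'}$ whose length upper-bounds the geodesic there. When $s' \in B$ so that $K_{s'} = Q$, the containment is automatic because every reference polygon is a subset of $Q$. The interesting case is $s' \in R$, where $K_{s'} = G_{s'}$: I plan to derive the containment from the definition of $G_{s'}$, whose walls are themselves geodesic shortest paths in $P \subseteq Q$ emanating from $s'$, and from the fact that both endpoints $x$ and $z$ lie in $G_{s'}$; any excursion of $\p[K_s]{x}{z}$ across a wall would contradict either the minimality of that wall as a geodesic or the minimality of the witness path $\pi$. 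Once the containment is in place, the chain $f_{s'}(z) \leq f_{s'}(x) + \g[K_s]{x}{z} \leq f_s(x) + \g[K_s]{x}{z} = f_s(z)$ finishes the proof.
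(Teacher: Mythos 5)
Your overall strategy---propagating the domination from $x$ to $z$ via a triangle inequality along the witness path---is the same as the paper's, but your one-shot execution hinges on a step that your sketch does not establish, and the justification you offer for it is invalid. The crux you correctly isolate is that, for a competing site $s'\in R$ with $x,z\in G_{s'}$, the subpath $\p[K_s]{x}{z}$ of the witness path (where $K_s$ is $Q$ or $G_s$, in your notation) must serve as a detour inside $G_{s'}$, i.e., must be contained in $G_{s'}$; indeed, since Euclidean length only lower-bounds geodesic length and $G_{s'}\subseteq Q$, your inequality $\g[G_{s'}]{x}{z}\le \g[K_s]{x}{z}$ can only hold with equality, which is exactly that containment. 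Your argument for it is that an excursion across a wall of $G_{s'}$ would contradict the minimality of the wall or of the witness path. Neither implication holds: the walls $\p{s'}{u_{s'}}$ and $\p{s'}{v_{s'}}$ are geodesics in $P$, whereas the witness path is a geodesic in $Q$ (or $G_s$), and the entire purpose of $Q$ is to admit shortcuts through $Q\setminus P$ that $P$ forbids. A shortest path in the larger polygon may perfectly well cross a shortest path of the smaller one, and two geodesics emanating from different sources ($s$ versus $s'$) can cross without contradicting the optimality of either. So the case $s'\in R$ is a genuine gap: $G_{s'}$ is not geodesically convex in $K_s$ by any argument you have given.

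For comparison, the paper proves the lemma by induction on the vertices $x,v_1,\dots,v_t,z$ of the witness subpath: each intermediate vertex is shown to be \idom by $s$, and for a point $y$ on the final segment one only needs the local inequality $f_{s'}(v_t)+|v_ty|\ge f_{s'}(y)$ together with the exact identity $f_s(y)=f_s(v_t)+|v_ty|$. This confines the geometric input to one straight segment of the witness path at a time, which the paper treats as an immediate consequence of each $f_{s'}$ being a $w$-distance in some polygon. To repair your proof, either establish the containment $\p[K_s]{x}{z}\subseteq G_{s'}$ by an actual structural argument about $G_{s'}$ and $Q$-geodesics, or restructure the argument segment-by-segment as the paper does, so that only the local triangle inequality for $f_{s'}$ across a single segment of the witness path is ever invoked.
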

\begin{proof}
Let $\gamma$ be the witness path of $f_s(z)$, and let $\gamma(x, z)$ denote the subpath of $\gamma$ connecting $x$ with $z$.
We prove the result by induction on the number of vertices of $\gamma(x, z)$.
For the base case, if $\gamma(x, z)$ consists of the single vertex $x = z$, then the results hold trivially. 
Let $x , v_1, \ldots, v_t,  z$ be the sequence of vertices of $\gamma(x, z)$. 
By induction hypothesis, we know that $v_t$ is \idom by $s$.
It remains only to show that the last segment $v_t z$ of this path is also \idom by $s$.
To show this, let $y$ be a point in the segment $v_t z$.
To prove that $y$  is also \idom by $s$, consider any site $s'\in \s_i$. 
Because each $f_s$ is a $w$-distance function measured from $s$ in some polygon, we know that the triangle inequality holds, and hence $f_{s'}(v_t) + |v_t y| \geq f_{s'}(y)$.
Because $v_t$ is \idom by $s$, we know that $f_s(v_t) \geq f_{s'}(v_t)$ and hence, we conclude that 
$f_s(y) = f_s(v_t) + |v_t y| \geq f_{s'}(v_t) + |v_t y| \geq f_{s'}(y)$.
That is, any point $y\in v_t z$ is \idom by $s$ as claimed.
Therefore, by induction the entire path $\gamma(x, z)$ is \idom by~$s$.
\end{proof}

Let $x$ be a point that is \idom by $s$. Extend the last segment of the witness path of $f_s(x)$ until it touches the boundary of $Q$ at a point $x^*$.
We say that $x^*$ is the \emph{$s$-shadow} of $x$. A direct consequence of Lemma~\ref{lemma:Shadow points} is the following result.

\begin{corollary}\label{corollary: Shadows in cell as well}
Let $s \in \s_i$.
If $x$ is \idom by $s$, then its $s$-shadow is also \idom by $s$ and lies on $\partial Q$.
\end{corollary}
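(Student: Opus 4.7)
The plan is to reduce the claim to Lemma~\ref{lemma:Shadow points} by showing that $x$ lies on the witness path of $f_s(x^*)$. Let $\gamma$ be the witness path of $f_s(x)$ and let $v_t$ be its last vertex before $x$ (taking $v_t=s$ if $\gamma$ has no interior vertex). By definition, $x^*$ lies on the ray from $v_t$ through $x$, extended until the first intersection with $\partial Q$. Assuming for the moment that $x^*$ lies in the domain of $f_s$ and that the segment $v_tx^*$ is contained in that domain with no interior reflex vertex of the ambient polygon, the geodesic from $s$ to $x^*$ in that polygon must reuse the reflex sequence of $\gamma$ up to $v_t$ and then shoot straight to $x^*$, since geodesics turn only at reflex vertices. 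Hence $\gamma$ extended by $xx^*$ is the witness path of $f_s(x^*)$, and it passes through $x$; Lemma~\ref{lemma:Shadow points} then immediately yields that $x^*$ is \idom by $s$, and $x^*\in\partial Q$ is built into the definition of the shadow.

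The main obstacle is handling sites $s\in R$, for which $f_s$ is defined only on $G_s$ rather than on all of $Q$. For $s\in B$ the ambient polygon is $Q$, the extended segment stays in $Q$ by definition of the shadow, and no reflex vertex of $Q$ can lie in the interior of $v_tx^*$ since such a vertex would be a point of $\partial Q$ strictly closer to $v_t$ than $x^*$, contradicting minimality of $x^*$. For $s\in R$ I must additionally argue that the extension of $v_tx$ does not leave $G_s$ through one of its walls $\p{s}{u_s}$ or $\p{s}{v_s}$ before reaching $\partial Q$. Since both walls are geodesic paths emanating from $s$ in $P$, a transverse crossing of the ray $v_tx$ with such a wall at a point $y$ strictly between $x$ and $x^*$ would produce two different candidate shortest paths from $s$ to $y$ inside the simply connected region $G_s$, one along the wall and one through $v_t$, contradicting the uniqueness of geodesics. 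Consequently the ray exits $G_s$ through $\partial Q(u_s,v_s)\subseteq\partial Q\cap\partial G_s$, so $x^*$ belongs to both $G_s$ and $\partial Q$, and the reduction to Lemma~\ref{lemma:Shadow points} goes through.
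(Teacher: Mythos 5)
Your proof is correct and follows the route the paper intends: the paper offers no explicit proof, calling the corollary a ``direct consequence'' of Lemma~\ref{lemma:Shadow points}, and your reduction---showing that appending the segment $xx^*$ turns the witness path of $f_s(x)$ into the witness path of $f_s(x^*)$, so that the lemma applies with $z=x^*$---is exactly that consequence spelled out. Your extra verification that for $s\in R$ the extension leaves $G_s$ through $\partial Q(u_s,v_s)$ rather than through a wall (so that $f_s(x^*)$ is defined at all) patches a detail the paper's one-line justification glosses over, and is a worthwhile addition.
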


The following result is crucial to guarantee the the resulting structure after the incremental construction coincides with the desired FVD of $\s$.

\begin{lemma}\label{lemma:Patch contains vcell} 
For each $0\leq i \leq |R|$ and for each site $s\in \s_i$, 
each point in the Voronoi cell $\cell{s, \s}$ is \idom by $s$.
\end{lemma}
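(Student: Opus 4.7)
The plan is to verify, for every competitor $s'\in \s_i$, the chain of inequalities
\[
f_s(x)\;=\;\dd{s}{x}\;\geq\;\dd{s'}{x}\;\geq\;f_{s'}(x),
\]
where the middle inequality is just the defining property of the Voronoi cell $\cell{s,\s}$ (applied to the full set $\s$, of which $\s_i$ is a subset). So the work reduces to (a) equating $f_s(x)$ with the true $w$-distance $\dd{s}{x}$, and (b) bounding each $f_{s'}(x)$ from above by $\dd{s'}{x}$. Both facts need to be proved in a case split on whether the relevant site is blue or red.

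For the upper bound (b), the case $s'\in B$ is immediate from Observation~\ref{obs: f_s and distance coincide in cell}, since $f_{s'}=\dd[Q]{s'}{\cdot}$ and $P\subseteq Q$ (Lemma~\ref{lemma:PropertiesOfQ}) forces geodesic distances in $Q$ to be no larger than in $P$. The case $s'\in R$ is the first assertion of Lemma~\ref{lemma: f_r and distance coincide in cell}, which gives $f_{s'}(x)\leq \dd{s'}{x}$ for every $x\in P$ (and in particular for our $x$, which lies in $P\subseteq Q$ and, via Lemma~\ref{lemma:SPMForSitesR}, inside $G_{s'}$ whenever the $G_{s'}$-value is needed). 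Note that if $x\notin G_{s'}$, then $f_{s'}(x)$ is simply not defined and imposes no constraint, so this part is trivial.

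For the equality (a), the case $s\in B$ follows from the ``moreover'' half of Observation~\ref{obs: f_s and distance coincide in cell}: since $x\in \cell{s,\s}$, the geodesic $\p{s}{x}$ hugs the funnel walls (all vertices of $\funnel{s,\s}$ are in $V_B$), so the path is still present in $Q$, giving $\dd[Q]{s}{x}=\dd{s}{x}$. The case $s\in R$ appeals to Lemma~\ref{lemma: f_r and distance coincide in cell}, which requires the technical hypothesis that $\p{s}{x}$ meets $\bcell{s,\s}$ only at its endpoints. This hypothesis is the main obstacle: for $x$ in the relative interior of $\cell{s,\s}$ the geodesic travels strictly inside the funnel and can only kink at reflex vertices of $P$ that lie on the two walls (which are disjoint from $\bcell{s,\s}$ except at their endpoints), so the hypothesis is satisfied and we get $f_s(x)=\dd{s}{x}$. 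For $x$ on $\bcell{s,\s}$ itself the hypothesis holds vacuously because $x$ \emph{is} an endpoint of $\p{s}{x}$, and the only remaining boundary pieces of $\cell{s,\s}$ are arcs of $\vd{\s}$, which are the closure of the interior-cell points we already handled; so equality extends by the continuity of $f_s$ and $\dd{s}{\cdot}$.

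Putting the pieces together, for any $s'\in \s_i$ we have $f_s(x)=\dd{s}{x}\geq \dd{s'}{x}\geq f_{s'}(x)$, which is exactly the statement that $x$ is $i$-dominated by $s$. The argument is entirely pointwise in $x$ and independent of $i$ beyond the fact that $s\in \s_i$, so the lemma holds for all $0\leq i\leq |R|$ simultaneously. The crux is the funnel geometry argument that justifies invoking Lemma~\ref{lemma: f_r and distance coincide in cell} without the walls of $\funnel{s,\s}$ secretly crossing $\bcell{s,\s}$; this is where I expect the proof to require the most care, and it leans essentially on the description of $\funnel{s,\s}$ in terms of geodesic paths from $s$ to the endpoints of $\bcell{s,\s}$.
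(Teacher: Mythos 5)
There is a genuine gap in the red-site half of your argument. You reduce everything to the identity $f_s(x)=\dd{s}{x}$ for $x\in\cell{s,\s}$, and for $s=r\in R$ you justify the hypothesis of Lemma~\ref{lemma: f_r and distance coincide in cell} by claiming that the geodesic $\p{r}{x}$ can only bend at reflex vertices lying on the two walls of $\funnel{r,\s}$, never at interior points of $\bcell{r,\s}$. That claim is false. After the simplification transform, every vertex of $P$ strictly between two consecutive convex vertices is reflex, so the base chain $\bcell{r,\s}$ of the funnel generally consists of reflex vertices of $P$ that jut into the funnel. A point $x\in\cell{r,\s}$ lying ``behind'' such a reflex vertex $v$ (as seen from $r$) has $v\in\p{r}{x}$, so the hypothesis of Lemma~\ref{lemma: f_r and distance coincide in cell} fails; this is not an edge case but the generic situation the paper's proof spends half its length on. Worse, in this situation the identity you are trying to prove can actually fail: the witness path $\p[G_r]{r}{x}$ may shortcut through the pocket of $G_r$ lying in $Q\setminus P$ (since $v$ need not belong to $V_B\cup V_C$ and hence need not be a boundary vertex of $G_r$), giving $f_r(x)<\dd{r}{x}$ strictly. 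So no continuity or closure argument can rescue the equality.

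The paper's proof handles exactly this case with a different mechanism: it takes $y$ to be the first point of $\bcell{r,\s}$ encountered along $\p{r}{x}$ from $r$, applies Lemma~\ref{lemma: f_r and distance coincide in cell} to $y$ (for which the hypothesis does hold) to conclude that $y$ is \idom by $r$, and then invokes Lemma~\ref{lemma:Shadow points} to propagate domination from $y$ along the witness path of $f_r(x)$ to $x$. The point is that even though $f_r(x)$ may drop below $\dd{r}{x}$, the triangle inequality forces every competitor's value to drop at least as much past $y$, so domination is preserved without the equality you were aiming for. Your treatment of part (b), of the blue sites, and the overall chain of inequalities all match the paper; the missing ingredient is this propagation step, and without it the proof does not go through.
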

\begin{proof}
Let $b\in B$ and let $x\in \cell{b, \s}$. 
Because $f_b(x) = \dd{b}{x}$ by Observation~\ref{obs: f_s and distance coincide in cell}, and since $\dd{b}{x} \geq \dd{s}{x}$ for any $s\in \s_i$ by definition of Voronoi cell, we know that $f_b(x) \geq \dd{s}{x}$ for any $s\in \s_i$. Moreover, as $x\in P$, Observation~\ref{obs: f_s and distance coincide in cell} and Lemma~\ref{lemma: f_r and distance coincide in cell} imply that $\dd{s}{x} \geq f_s(x)$. 
Thus, we conclude that $f_b(x) \geq f_s(x)$ for any $s\in \s_i$, i.e., $x$ is \idom by $s$. That is, $\cell{b, \s}$ is \idom by $b$ for each $b\in B$.

We prove now an analogous result for the red sites.
Let $r\in R$, and let $x\in \cell{r, \s}$.  We have two cases, if the path $\p{r}{x}$ contains no point of $\bcell{r, \s}$, then by Lemma~\ref{lemma: f_r and distance coincide in cell} we know that $f_r(x) = \dd{r}{x}$ and the same argument as for the blue sites described above applies. That is, in this case $x$ is \idom by $r$.
For the other case, assume that the path $\p{r}{x}$ contains a point of $\bcell{r, \s}$, and let $y$ be the first point of $\bcell{r, \s}$ in this path when going from $r$ to $x$.
Because the path $\p{r}{y}$ contains no interior point of $\bcell{r, \s}$, Lemma~\ref{lemma: f_r and distance coincide in cell} applies, and hence we conclude that $y$ \idom by $r$.
Because $y$ lies in the witness path $\p[G_r]{r}{x} = \p{r}{x}$ of $f_r(x)$, and since $y$ is \idom by $r$, Lemma~\ref{lemma:Shadow points} implies that $x$ \idom by $r$.
Therefore, regardless of the case, we know that for each $s\in \s_i$, each point in $\cell{s, \s}$ is \idom by $s$.
\end{proof}

\subsection{The insertion process}\label{section: Insertion process}

Let $r$ be the $i$-th site of $R$ inserted in our randomized incremental construction. 
To simplify our incremental construction, instead of constructing the entire set of points that are \idom by $r$, which might contain several connected components,
we focus exclusively on constructing the connected component containing $\cell{r, \s}$.
This simplifies the structure of the upper envelope, and helps us to prove a bound on its complexity. 

We define the \emph{envelope-graph} of $\s_i$ recursively. 
For $i = 0$, the envelope-graph of $\s_0$ is simply the Voronoi-tree of $\vd[Q]{B}$.
This envelope-graph induces a decomposition of $Q$ into \emph{\icells}.
The \emph{\icell} of each site $s\in \s_0$ is the connected component in this decomposition that contains $\cell{s, \s}$.

Given the envelope-graph of $\s_{i-1}$, the envelope-graph of $\s_i$ is defined as follows.
We consider the set of all points of $Q$ that are $i$-dominated by $r$ and the connected components that they induce. 
The \emph{\icell} of $r$ is the connected component that contains $\cell{r, \s}$ induced by these points.
The envelope-graph of $\s_i$ is then obtained by adding to it the boundary of the \icell of $r$, and removing everything inside it. 
In this way, the \icells[(i-1)] of the envelope graph of $\s_{i-1}$ might shrink. 
However, Lemma~\ref{lemma:Patch contains vcell} guarantees that for each site $s\in \s_i$, the Voronoi cell $\cell{s, \s}$ is $i$-dominated by $s$.
Therefore, $\cell{s, \s}$ is still contained in the \icell of $s$, i.e., the \icell of $s$ is non-empty.

\begin{lemma}\label{lemma:Structure of the envelope}
The envelope-graph of $\s_i$ is a tree with at most $2|\s_i|$ leaves lying on the boundary of $Q$.
\end{lemma}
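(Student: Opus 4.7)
The plan is to induct on $i$. In the base case $i=0$, the envelope-graph is, by definition, the Voronoi tree of $\vd[Q]{B}$; by the standard structural properties of geodesic farthest-point Voronoi diagrams in (weakly) simple polygons reviewed in Section~1, this is a tree with leaves on $\partial Q$, one leaf per (non-empty) Voronoi cell, giving $|B| = |\s_0|$ leaves, which is at most $2|\s_0|$.

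For the inductive step, I would assume that $T_{i-1}$, the envelope-graph of $\s_{i-1}$, is a tree with at most $2|\s_{i-1}|$ leaves on $\partial Q$, and consider the insertion of the $i$-th red site $r$. The new envelope-graph $T_i$ is obtained from $T_{i-1}$ by deleting everything in $\interior{C_r}$ and inserting the portion of $\partial C_r$ that lies in the interior of $Q$, where $C_r$ is the \icell of $r$. The argument reduces to two structural claims about $C_r$: \emph{(i)} $C_r$ is simply connected, and \emph{(ii)} $C_r \cap \partial Q$ is a single connected arc of $\partial Q$. Granted (i) and (ii), $\partial C_r \setminus \partial Q$ is a single simple curve with both endpoints on $\partial Q$; by a planar-topology argument, removing $\interior{C_r}$ from $T_{i-1}$ leaves a forest, and attaching this new curve in its place heals it back into a single tree, acquiring at most two new leaves on $\partial Q$ (the two endpoints of the new curve). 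Hence $L_i \le L_{i-1} + 2 \le 2|\s_{i-1}| + 2 \le 2|\s_i|$, completing the induction.

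The main obstacle is establishing (i) and (ii). Both rest on the shadow machinery of Lemma~\ref{lemma:Shadow points} and Corollary~\ref{corollary: Shadows in cell as well}: for every $x \in C_r$, extending the witness path $\p[G_r]{r}{x}$ past $x$ to the shadow $x^\ast \in \partial Q$ produces a curve whose every point is \idom by $r$ and which, being connected and meeting $C_r$ at $x$, lies entirely in $C_r$. This ``radial'' coverage of $C_r$ by such shadow rays from $r$ rules out any topological hole: a hole would be a bounded connected component of $Q \setminus C_r$ consisting of points \idom by some other site $s$, and the $s$-shadow rays of these points toward $\partial Q$ would have to traverse $\partial C_r$, producing a point simultaneously \idom by $r$ and $s$; by the general-position assumption such coincidences occur only on a lower-dimensional bisector set and cannot bound a two-dimensional region. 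An analogous argument applied along $\partial Q$ rules out $C_r \cap \partial Q$ consisting of several disjoint arcs, since a ``gap'' sub-arc would yield shadow rays piercing into $\interior{C_r}$ and producing the same impossible double-dominated points. With (i) and (ii) in hand, the inductive step closes and the lemma follows.
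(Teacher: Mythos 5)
Your skeleton matches the paper's: induct on $i$, reduce the inductive step to two structural claims about the \icell $C_r$ of the newly inserted site $r$ (no holes, and a single arc of intersection with $\partial Q$), then do the planar bookkeeping and the leaf count $L_i\le L_{i-1}+2$. The base case and the counting are fine. The gap is in how you establish (i) and (ii). For (i) you argue that the $s$-shadow ray of a point in a hole must traverse $\partial C_r$, ``producing a point simultaneously \idom by $r$ and $s$,'' and that general position forbids this. But a doubly-dominated point is not forbidden by anything: every point of every bisector arc of the envelope-graph is dominated by two sites, and the crossing point you exhibit lies exactly on such a one-dimensional set, so no contradiction follows. (To force a two-dimensional set of coincidences you would have to show the ray spends a positive-length interval in the \emph{interior} of $C_r$ and then sweep over all $x$ in the hole, and even then overlapping rays defeat a naive dimension count.) For (ii) the argument fails outright: a point $x$ in the region $K$ between a ``gap'' arc of $\partial Q$ and $C_r$ has its shadow ray terminating on $\partial Q$, and that ray can escape through the gap arc itself without ever entering $C_r$, so no doubly-dominated point in $\interior{C_r}$ is produced at all.

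The missing idea is to run witness paths in the other direction, from $r$ itself, which is what the paper does. For any point $x$ in the offending region $K$ (a hole of $C_r$, or a component of $G_r\setminus C_r$ cut off between two arcs of $C_r\cap\partial Q$), the value $f_r(x)$ is defined because $K\subseteq G_r$, and its witness path $\p[G_r]{r}{x}$ runs from $r$ into $K$; since $C_r$ separates the component of $G_r\setminus C_r$ containing $K$ from the rest of $G_r$, this path contains a point $y$ that is \idom by $r$. Lemma~\ref{lemma:Shadow points} then propagates $r$-domination \emph{forward} along the witness path from $y$ to $x$, so $x$ is \idom by $r$ and is joined to $y\in C_r$ by a path of points \idom by $r$; hence $x\in C_r$, the desired contradiction. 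Shadows (Corollary~\ref{corollary: Shadows in cell as well}) are needed only to guarantee that $C_r$ reaches $\partial Q$ at all, and Lemmas~\ref{lemma:Patch contains vcell} and~\ref{lemma: Patch in G_r} supply the non-emptiness and containment in $G_r$ that the separation argument uses. With this replacement your inductive step closes exactly as in the paper; as written, however, both of your key claims rest on a non-contradiction.
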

\begin{proof}
We prove the result by induction $i$. For the base case the result holds for the Voronoi tree of $\vd[Q]{B}$ which has $|B| = |\s_0|$ leaves on the boundary of $Q$.
When inserting the $i$-th site $r$ of $R$ into the envelope-graph of $\s_{i-1}$, we know that the \icell of $r$ is connected by construction.
We claim further that the \icell of $r$ intersects $\partial Q$ in a single connected component.
Notice that this \icell intersects $\partial Q$ by Corollary~\ref{corollary: Shadows in cell as well}.
To show that this intersections consists of a single component, we assume for a contradiction that the \icell of $r$ intersects $\partial Q$ in two or more connected components. 
Because the entire \icell of $r$ is contained in $G_r$ by definition of $f_r$,  by removing the \icell of $r$ from $G_r$, we would obtain several connected components, one of them, say $K$, contained in $G_r$. However, in this case we know that for each $x\in K$, the witness path of $f_r(x)$ is defined, and crosses the \icell of $r$; see Figure~\ref{fig:EnvelopeGraph}. 
That is, this witness path contains a point, say $y$, that is \idom by $r$. 
Thus, Lemma~\ref{lemma:Shadow points} implies that $x$ is \idom by $r$. 
Because this also holds for any point on the portion of the witness path of $f_r(x)$ connecting $y$ with $x$, 
we conclude that $x$ and $y$ are connected by points that are \idom by $r$. 
Therefore, $x$ must belong to the \icell of $r$ leading to a contradiction. 
Thus, we proved that the \icell of $r$ intersects $\partial Q$ in a single connected component. 
In particular, this implies that its boundary consists of a single connected path with endpoints on $\partial Q$.

\begin{figure}[ht]
\centering
\includegraphics{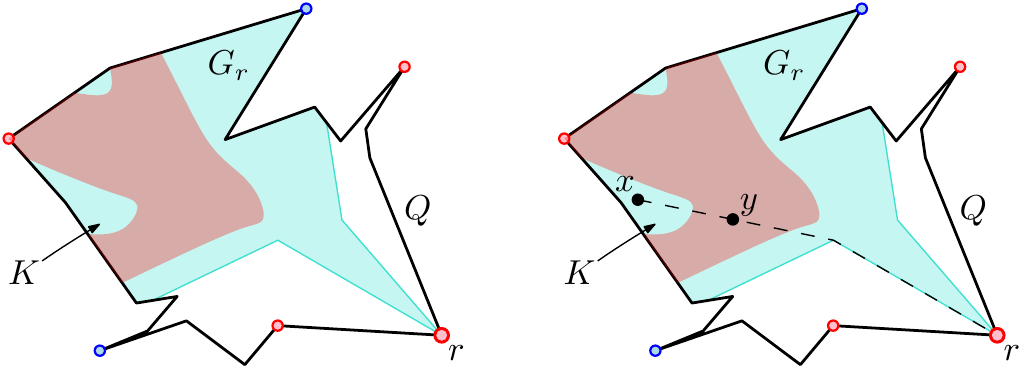}
%[width=1\textwidth]
\caption{The proof of Lemma~\ref{lemma:Structure of the envelope}. The \icell of $r$ is denoted in red and contained in $G_r$. We are assuming for a contradiction that the intersection of this \icell with $\partial Q$ is not connected.}
\label{fig:EnvelopeGraph}
\end{figure}

Note also that all \icells are non-empty by Lemma~\ref{lemma:Patch contains vcell}.
Moreover, Lemma~\ref{lemma:Shadow points} guarantees that the envelope-graph contains no bounded cycles. 
Therefore, each \icell of a site of $\s_i$ intersects the boundary of $Q$. 
Because the envelope-graph of $\s_{i-1}$ is a tree, by adding the path bounding the \icell of $r$, and removing all that this \icell encloses from the graph,
we obtain a new acyclic graph with exactly one new \icell. That is, the envelope-graph of $\s_i$ is a tree.

Because the insertion of $r$ added at most 2 new leaves to the envelope-graph, by the induction hypothesis we get that the envelope-graph of $\s_i$ has at most $2|\s_{i-1}| + 2 = 2|\s_i|$ leaves concluding our proof.
\end{proof}

\begin{lemma}\label{lemma: Patch in G_r}
The \icell of $r$ is contained in $G_r$ and intersects no point of the walls of $G_r$.
\end{lemma}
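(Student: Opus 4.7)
The first claim—containment of the \icell of $r$ in $G_r$—I would dispatch by a one-line appeal to definitions: since $f_r$ is only defined on $G_r$, a point can be \idom by $r$ only if it lies in $G_r$, and the \icell of $r$ is by definition a subset of the \idom region of $r$.

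The real content is the disjointness from the walls, and I would focus on the wall $\p{r}{u_r}$ (the argument for $\p{r}{v_r}$ is symmetric). The plan is a short contradiction using Lemma~\ref{lemma:Shadow points}, but it rests on one geometric observation that I would establish first: the wall $\p{r}{u_r}$ is a geodesic not only in $P$ but also in $G_r$, so that $\p[G_r]{r}{u_r} = \p{r}{u_r}$. To see this, each bending vertex $v$ of the wall is a reflex vertex of $P$ lying on $\partial G_r$; I would argue that locally around $v$ the interior of $G_r$ coincides with the interior of the funnel $\funnel{r \to \partial P(u_r, v_r)}$ (the ``extra region'' between $\partial P(u_r, v_r)$ and $\partial Q(u_r, v_r)$ meets the funnel only across the far base $\partial P(u_r, v_r)$, not near the walls), and in this funnel the bending vertices of the walls are reflex by the standard geodesic-funnel analysis. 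So $v$ is reflex in $G_r$, the geodesic in $G_r$ bends there exactly as the wall does, and the two paths coincide.

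With that in hand the contradiction is quick. Recall that $u_r$ was moved to lie strictly inside a blue Voronoi cell $\cell{b_u, \s}$ for some $b_u \in B \subseteq \s_i$. Combining Observation~\ref{obs: f_s and distance coincide in cell} with general position gives $f_{b_u}(u_r) = \dd{b_u}{u_r} > \dd{r}{u_r} \geq f_r(u_r)$, so $u_r$ is not \idom by $r$. Now suppose for contradiction that some $y \neq r$ on $\p{r}{u_r}$ is \idom by $r$. Then $y$ lies on the witness path $\p[G_r]{r}{u_r}$ of $f_r(u_r)$, and Lemma~\ref{lemma:Shadow points} would force $u_r$ itself to be \idom by $r$—contradicting the previous line. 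Hence no point of $\p{r}{u_r}$ other than the apex is \idom by $r$, and the same argument applied to $v_r$ shows the \icell of $r$ stays disjoint from both walls of $G_r$.

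The main obstacle I expect is the geometric step showing $\p[G_r]{r}{u_r} = \p{r}{u_r}$, since $G_r$ is carved out of the larger polygon $Q$ in which several reflex vertices of $P$ have been ``smoothed''. The careful part is to verify that the bending vertices of the wall are not among the smoothed ones: they lie on the walls (part of $\partial G_r$), not on the arc $\partial Q(u_r, v_r)$, and the local interior of $G_r$ at each of them agrees with the local interior of the funnel $\funnel{r \to \partial P(u_r, v_r)}$, so they remain reflex in $G_r$ and the geodesic in $G_r$ has no shortcut past them.
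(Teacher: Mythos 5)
Your proof is correct and follows essentially the same route as the paper's: both arguments reduce the wall-disjointness to the fact that $u_r$ (resp.\ $v_r$) lies strictly inside a blue cell and then invoke Lemma~\ref{lemma:Shadow points} along the wall viewed as the witness path of $f_r(u_r)$. The only difference is that you explicitly verify the step $\p[G_r]{r}{u_r} = \p{r}{u_r}$ (that the wall remains a geodesic in $G_r$), which the paper's proof uses implicitly; making it explicit is a reasonable strengthening, not a change of approach.
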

\begin{proof}
Because $f_r$ is only defined in $G_r$, we get trivially that the \icell of $r$ is contained in $G_r$. To prove the second part of the result,
recall that the walls of $G_r$ are defined by the paths $\p{r}{u_r}$ and $\p{r}{v_r}$, where $u_r$ and $v_r$ lie strictly inside of the Voronoi cells of some blue sites (see Section~\ref{sec: Preprocessing of red sites}).
By Lemma~\ref{lemma:Patch contains vcell}, we know then that $u_r$ and $v_r$ lie strictly inside the \icells of some blue sites in $B$.
Therefore, if there was a point of the \icell of $r$ on any of the walls of $G_r$, say on $\p{r}{u_r}$, then by Lemma~\ref{lemma:Shadow points} $u_r$ would belong to the \icell of $r$---a contradiction as $u_r$ lies strictly inside the \icell of a blue site. 
\end{proof}

\subsection{Algorithmic description} 

We proceed now to describe algorithmically how to carry on the incremental construction described above, and construct the envelope-graph of $\s_i$.
Our algorithm starts with the refined FVD of $\vd[Q]{B}$, and on each round constructs the boundary to the \icell of a new site of $R$. 
In addition to our envelope-graph, we maintain a set of refined triangles that cover each \icell in the same way that they cover the Voronoi cells in the refined FVD; see Figure~\ref{fig:RrefinedDiagram}.
We call this representation the \emph{refined envelope} of $\s_i$.
We assume inductively that the envelope-graph of $\s_i$ $\vd[Q]{B}$ is stored as a refined envelope. 
For the base case this holds as we assume that we have at hand the refined FVD of $\vd[Q]{B}$.
In addition, we assume that for each vertex $v$ of $Q$, we know the site whose \icell contains $v$. 
Moreover, we assume also that we have a pointer to the refined triangle of this \icell that contains $v$.

\begin{figure}[ht]
\centering
\includegraphics{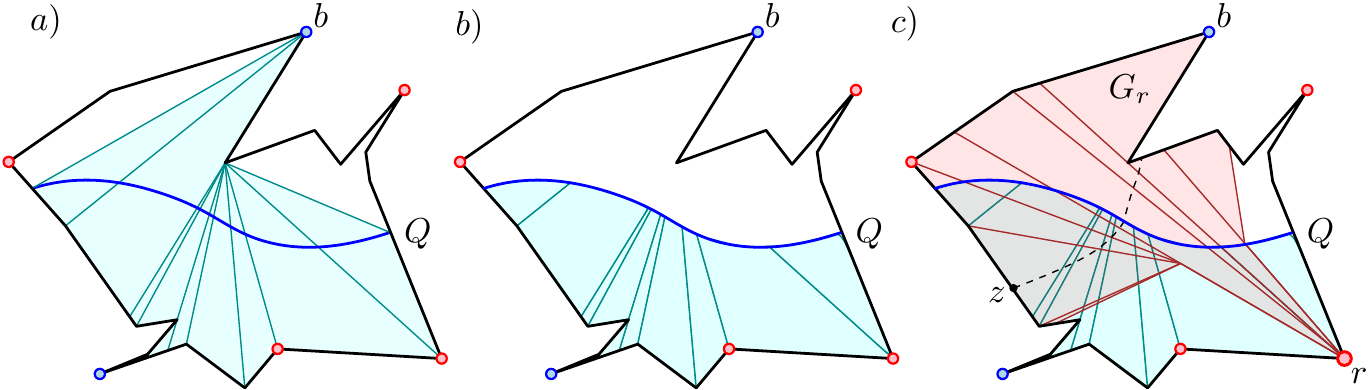}
%[width=1\textwidth]
\caption{$a)$ a site $b\in B$ and the defining triangles of $f_b$. 
$b)$ The refined triangles obtained by intersecting with the \icell[0] of $b$.
$c)$ The insertion of the red site $r$ and the update of the envelope. }
\label{fig:RrefinedDiagram}
\end{figure}

For each $b\in B$, let $\mu_b$ be the set of refined triangles that belong to $b$. 
For a site $r\in R$, let $\mu_r$ denote the set of triangles used to describe $f_r$, i.e,  the set of triangles in the SPM of $r$ inside of $G_r$.
Thus regardless of the case $\mu_s$ denotes a set of triangles (and their associated distance function) \emph{owned} by $s$.

To compute the \icell of $r$, the first step is to find a point lying on its boundary.
Note that by Lemma~\ref{lemma:Patch contains vcell}, we can easily find some points on $\partial Q$ that lie in the \icell of $r$.
Namely, we can take an endpoint $\ell$ of $\bcell{r, \s}$. 
Because $\ell\in \LL$, we know that it is a vertex of~$Q$.
Thus, we know the triangle $\triangle$ of the refined envelope of $\s_{i-1}$ that contains $\ell$, and the site $s\in \s_{i-1}$ that owns $\triangle$.
Moreover, we also know the triangle $\triangle'$ of $\mu_r$ that contains~$\ell$ in the SPM of~$r$.
Let $H$ be the intersection of $\triangle$ and $\triangle'$.
Because both $f_r$ and $f_s$ have constant description inside of $H$, we can compute their upper envelope in this small domain.
If there is a point $x$ in $H\cap \partial Q$ such that $f_r(x) = f_s(x)$, then we have found an endpoint of the boundary of the \icell of $r$.
Otherwise, we know that $H\cap \partial Q$ is entirely contained in the \icell of $r$, in this case, we can move counterclockwise to a neighboring triangle.
Note that the counterclockwise endpoint of $H\cap \partial Q$ belongs to the boundary of either $\triangle$ or $\triangle'$. 
In the former case, we move to the next refined triangle along $\partial Q$ of the refined envelope of $\s_{i-1}$, 
or in the latter case to the neighboring triangle of $\triangle'$ in $\mu_r$.
After moving, we can redefine $H$ and repeat the process until finding an endpoint $z$ of the boundary of the \icell of $r$; see Figure~\ref{fig:RrefinedDiagram}. 
Lemma~\ref{lemma: Patch in G_r} implies that $z$ will be reached before reaching a point on the wall of $G_r$.
The time needed to find $z$ is then proportional to the number of visited triangles. 

Let $D_r$ be the number of arcs of the envelope-graph of $\s_{i-1}$ that \emph{disappear} in $\s_i$, i.e., that are completely contained in the \icell of~$r$.
We claim that each refined triangle visited in the search for $z$ (except for the last one) defines an arc of  the  envelope-graph of $\s_{i-1}$ that disappears. 
If this claim is true, then the running time needed to find $z$ is at most $O(\mu_r + D_r)$.
To prove our claim, consider a refined triangle $\triangle$ of the refined envelope of $\s_{i-1}$ belonging to some site $s\in \s_{i-1}$ that is visited. 
Then there is an arc $a$ of this envelope-graph defined by this $\triangle$. 
Notice that by Lemma~\ref{lemma:Shadow points}, if arc $a$ does not disappear, then all points in $\triangle$ that lie on $\partial Q$ must belong to the \icell of $s$. 
However, we know by our construction that $f_r > f_s$ inside of $\triangle\cap \partial Q$---a contradiction. 
Thus, arc $a$ cannot be in the envelope-graph of $\s_i$ proving our claim. 

Because each visited triangle of the refined envelope of $\s_{i-1}$ (except maybe the last one) corresponds to an arc that disappears, we know that the total number of visited triangles during the search for $z$ is $O(D_r + |\mu_r|)$.
Thus, as constructing $H$ and the upper envelope inside $H$ takes $O(1)$ time, the total time to find $z$ is $O(D_r + |\mu_r|)$.

We proceed now in a similar way to trace the boundary of the \icell of $r$ inside $Q$ starting from $z$.
Note that by Lemma~\ref{lemma: Patch in G_r}, this boundary is entirely contained in the interior of $G_r$ (except for its endpoints). 
When we found~$z$, we know the structure of the boundary of the \icell of $r$ inside of $H$. 
Note that inside of $H$, this boundary has constant description. 
In fact, we know that it is either a hyperbolic arc, or a straight-line segment, being the locus of points where $f_r(x) = f_{s'}(x)$, where $s'$ is the site who owns the refined triangle of the refined envelope of $\s_{i-1}$ defining $H$.
Thus, we can move along this arc until it hits the boundary of $H$. 
At this point, we move either to the neighboring refined triangle in the refined envelope of $\s_{i-1}$, 
or the neighboring triangle in the description of $f_r$,  
depending on what boundary of $H$ this arc hits. 
Proceeding in this fashion, we can reconstruct the boundary of the entire \icell of $r$. 
Moreover, each time that we move to a neighboring refined triangle, we obtain a new arc of the boundary of the \icell of $r$. 
Thus, the number of times we have to move corresponds to the size of the \icell of $r$. 
Because each operation takes constant time, the total running time is linear on the \emph{size} of the \icell of $r$, i.e., the number of arcs and vertices that define its boundary.

Once we have computed the boundary of the \icell of $r$, we must simply crop each triangle in the description of $f_r$ that defines and arc bounding the \icell of $r$. 
In this way, we obtain the refined triangles of the \icell of $r$.
In a similar way, we can update the visited triangles of the refined envelope of $\s_{i-1}$ by removing their portion inside of the \icell of $r$.
Moreover, we know which arcs of the envelope-graph of $\s_{i-1}$ disappear, so we can remove them and update the graph to obtain both the envelope-graph and the refined envelope of $\s_i$.

\begin{lemma}\label{lemma:Time to insert one site of R}
Let $r$ be the $i$-th site of $R$ inserted in our randomized incremental construction. 
The \icell of $r$ can be computed in $O(M_r + D_r + |\mu_r|)$ time, where $M_r$ is the size of the \icell of $r$, and $D_r$ is the number of arcs of the envelope-graph of $\s_{i-1}$ that disappear.
Moreover, the refined envelope of $\s_i$ can be obtained within the same time from that of $\s_{i-1}$.
\end{lemma}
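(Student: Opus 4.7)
The plan is to follow the algorithmic description almost verbatim, but package it as three analysis phases, each of which contributes one term to the final bound.

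First I would handle the \emph{start-up phase}: locating a point $z$ on the boundary of the \icell of $r$. By Lemma~\ref{lemma:Patch contains vcell}, an endpoint $\ell$ of $\bcell{r,\s}$ lies in the \icell of $r$ and, being a leaf, is a vertex of $Q$. Using the stored pointer I look up the refined triangle $\triangle$ of the refined envelope of $\s_{i-1}$ containing $\ell$ (owned by some $s\in\s_{i-1}$) and the triangle $\triangle'\in\mu_r$ containing $\ell$, and compute the $O(1)$-size upper envelope of $f_r$ and $f_s$ on $H=\triangle\cap\triangle'$. If $\partial Q\cap H$ contains a point where $f_r=f_s$, that point is $z$; otherwise the whole of $\partial Q\cap H$ is \idom by $r$ and I advance counterclockwise along $\partial Q$ to the adjacent triangle of $\mu_r$ or of the refined envelope, depending on which side of $H$ is crossed first. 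Each step takes $O(1)$ time.

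Next I would bound this start-up cost. Every visited refined triangle of the envelope of $\s_{i-1}$ (other than the last) has its entire $\partial Q$-edge \idom by $r$, and therefore by Lemma~\ref{lemma:Shadow points} the arc of the envelope-graph of $\s_{i-1}$ that this triangle supports is wholly contained in the \icell of $r$; hence it disappears on insertion of $r$. Charging each such visit to a distinct arc in $D_r$, and charging visits to neighbouring triangles of $\mu_r$ to $|\mu_r|$, the start-up phase runs in $O(D_r+|\mu_r|)$ time.

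Then I would analyse the \emph{tracing phase}, which walks the boundary of the \icell of $r$ from $z$. By Lemma~\ref{lemma: Patch in G_r} this boundary stays in the interior of $G_r$ (so it never leaves the domain on which $f_r$ is defined) and, by the structure of the envelope-graph established in Lemma~\ref{lemma:Structure of the envelope}, is a simple path. Inside each cell $H=\triangle\cap\triangle'$ the bisecting locus $\{f_r=f_{s'}\}$ is a single hyperbolic arc or segment, which I trace in $O(1)$ time until it hits $\partial H$; I then move to the neighbouring triangle on the appropriate side and iterate. Each step produces one arc of the boundary of the \icell of $r$, so the total number of steps is $O(M_r)$, giving $O(M_r)$ time for the tracing phase.

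Finally, for the \emph{update phase}, once the boundary is in hand I crop each triangle of $\mu_r$ that contributes an arc of the \icell of $r$ to produce its refined triangles, crop the visited triangles of the refined envelope of $\s_{i-1}$ symmetrically, and delete from the envelope-graph the arcs marked as disappeared (counted by $D_r$) together with their incident vertices, inserting in their place the newly traced boundary path. All these updates are local and each costs $O(1)$, so their total cost is $O(M_r+D_r+|\mu_r|)$. Adding the three bounds gives the claimed running time and simultaneously produces the refined envelope of $\s_i$. The step I expect to be the main obstacle is the start-up bound: making rigorous the charging of each visited triangle to a distinct disappearing arc, which requires invoking Lemma~\ref{lemma:Shadow points} carefully to ensure that the arc really is $i$-dominated by $r$ throughout, rather than only on its $\partial Q$-endpoint.
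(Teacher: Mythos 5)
Your proposal is correct and follows essentially the same route as the paper's own argument (which is the algorithmic description preceding the lemma): the same start-up walk along $\partial Q$ from a leaf of $\bcell{r,\s}$ charged to disappearing arcs via Lemma~\ref{lemma:Shadow points} (the paper phrases it as a contradiction through Corollary~\ref{corollary: Shadows in cell as well}: a surviving arc would force its $s$-shadow in $\triangle\cap\partial Q$ to be \idom by $s$, contradicting $f_r>f_s$ there), the same $O(M_r)$ tracing of the bisector through intersections of triangles, and the same local cropping for the update. The delicate step you flag is exactly the one the paper handles, and it handles it the same way.
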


We say that the \emph{complexity} of the envelope-graph of $\s_i$ is the number of vertices and arcs defining it.

\begin{lemma}\label{lemma: Complexity of envelope-graph}
The expected complexity of the envelope-graph of $\s_i$ is $O(n)$.
\end{lemma}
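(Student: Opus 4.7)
My plan is to separate the topological complexity of the envelope-graph from its geometric (breakpoint) complexity, and then use a charging argument for the latter. For the topological part, Lemma~\ref{lemma:Structure of the envelope} tells us that the envelope-graph of $\s_i$ is a tree with at most $2|\s_i| \leq 2m$ leaves on $\partial Q$, so the number of leaves, degree-$\geq 3$ branching points, and skeleton edges is $O(m) = O(n)$ (recall $m \leq n$).

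The bulk of the work is bounding the number of \emph{breakpoints} on the arcs: the points at which a bisector between two distance functions $f_s$ and $f_{s'}$ transitions from one hyperbolic or straight piece to the next. Within a fixed pair consisting of an SPM triangle of $s$ (in $Q$ if $s$ is blue, in $G_s$ if $s$ is red) and one of $s'$, the bisector is a single piece, so each breakpoint corresponds to the arc crossing an SPM cell boundary, i.e.\ to an apex change on one side. I would charge each breakpoint to the reflex vertex witnessing that apex change: a reflex vertex of $Q$ for apex changes of a blue function $f_b$, and a reflex vertex of $G_r$ for apex changes of a red function $f_r$.

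The crux is showing that each reflex vertex is charged $O(1)$ times. For a reflex vertex $v$ of $Q$, $v$ lies in the interior of a single \icell, and locally the refined envelope decomposes this patch using only the $O(1)$ SPM triangles of its owner that are incident to $v$. Because the envelope-graph is an acyclic tree (Lemma~\ref{lemma:Structure of the envelope}), the patch boundary cannot wind repeatedly around $v$, so only $O(1)$ breakpoints on the patch boundary can be blamed on $v$. The same argument applies to reflex vertices of each $G_r$ via Lemma~\ref{lemma: Patch in G_r}, which guarantees that $r$'s patch lies in $G_r$ so that exactly one $G_r$-SPM is relevant near any vertex of $G_r$ sitting inside $r$'s patch. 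Summing the charges yields a bound of $O(|Q|) + O\bigl(\sum_{r \in R \cap \s_i} |G_r|\bigr)$; by Lemma~\ref{lemma:PropertiesOfQ} we have $|Q| = O(n)$, and by Lemma~\ref{lemma:SPMForSitesR} we have $\ex{\sum_{r \in R}|G_r|} = O(n)$, which dominates the partial sum over $R \cap \s_i$. Adding the topological $O(m)$ term gives the desired $O(n)$ bound in expectation.

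The main obstacle is justifying the $O(1)$-charge-per-vertex claim cleanly; this parallels the classical argument underlying Aronov's $O(n+m)$ bound for the geodesic FVD. The subtlety here is that the envelope-graph mixes globally defined distance functions for blue sites (on $Q$) with locally defined ones for red sites (on $G_r$), so the charging has to correctly identify which polygon's SPM is relevant near each vertex. The containment property of Lemma~\ref{lemma: Patch in G_r}, together with the witness-path monotonicity of Lemma~\ref{lemma:Shadow points} and Corollary~\ref{corollary: Shadows in cell as well}, is what makes the local analysis well-defined and lets the $O(1)$-charge argument go through uniformly across both color classes.
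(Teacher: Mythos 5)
Your overall strategy is the same as the paper's: use Lemma~\ref{lemma:Structure of the envelope} to bound the branching structure by $O(|\s_i|)$, reduce the remaining complexity to counting degree-two vertices (breakpoints) of the envelope-graph, charge each breakpoint to a feature of the relevant SPM, and bound the expected total number of such features by $O(n)$ via $\sum_{b\in B}|\mu_b| = O(n)$ and $\ex{\sum_{r\in R}|\mu_r|} = O(n)$ (Lemma~\ref{lemma:SPMForSitesR}). However, the step you flag as ``the main obstacle'' is genuinely the whole content of the lemma, and your proposed justification does not close it. Two concrete problems. First, charging to reflex vertices is the wrong granularity: a single reflex vertex $v$ can be the apex of a fan of $\Theta(n)$ SPM triangles of the owner of the patch containing $v$, so the assertion that only ``$O(1)$ SPM triangles of its owner are incident to $v$'' is false. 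The charge should go to the straight-line edges of the triangles in $\Gamma_i=\bigcup_{s\in\s_i}\mu_s$, of which there are $O(|\Gamma_i|)$ in total. Second, and more seriously, acyclicity of the envelope-graph does not prevent a single arc from crossing a fixed segment many times --- a wiggly arc can recross a line segment repeatedly without winding around any point and without creating a cycle --- so ``the patch boundary cannot wind repeatedly around $v$'' is not a proof of the $O(1)$-charge claim.

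The argument that actually works is the shadow argument, which you cite but never deploy. Suppose two degree-two vertices $u$ and $v$ of the envelope-graph lie on a single straight edge $e$ of a triangle of $\Gamma_i$. Among the four arcs incident to $u$ and $v$, two must leave $e$ on the same side; then some point $x$ in the interior of one of these arcs has the property that the segment $x x^*$ to its $s$-shadow crosses the other arc. This contradicts Lemma~\ref{lemma:Shadow points} and Corollary~\ref{corollary: Shadows in cell as well}, which imply that $x x^*$ lies entirely in the patch of $s$ and hence crosses no arc of the envelope-graph. This gives at most one breakpoint per triangle edge, hence complexity $O(|\Gamma_i|)$, which is $O(n)$ in expectation. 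Without extracting this one-crossing-per-edge statement from the shadow lemmas, your charging scheme has no finite per-feature bound and the proof does not go through.
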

\begin{proof}
Recall that for each $b\in B$, $\mu_b$ denotes the set of refined triangles that belong to $b$. 
Because $\vd[Q]{B}$ is a FVD of $B$ in $Q$, its Voronoi tree consists of $O(n)$ arcs and vertices.
Therefore, $\sum_{b\in B} |\mu_b| = O(n)$.
Also, for a site $r\in R$, $\mu_r$ denotes the set of triangles used to describe $f_r$ inside of $G_r$.
Notice that Lemma~\ref{lemma:SPMForSitesR} implies that  $\ex{\sum_{r\in R} |\mu_r|}= O(n)$.
Thus, because $\s = B\cup R$, we know that  $\ex{\sum_{s\in \s} |\mu_s|} = O(n)$.

Let $\Gamma_i = \bigcup_{s\in \s_i} \mu_s$ be the set containing all the triangles defined by the sites in $\s_i$.
Thus, $|\Gamma_i| = \sum_{s\in \s_i}|\mu_s|$.
Let $Z_i$ be the complexity of the envelope-graph of $\s_i$. 
We claim that $Z_i= O(|\Gamma_i|)$. If this claim is true, then 
\[ \ex{Z_i} = O(\ex{|\Gamma_i|})= \ex{\sum_{s\in \s_i} |\mu_s|} = O(n). \]

Thus it only remains to prove indeed that $Z_i= O(|\Gamma_i|)$.
Because the envelope-graph is a plane graph and has $|\s_i|$ leaves by Lemma~\ref{lemma:Structure of the envelope}, it has at most $|\s_i| - 2$ vertices of degree larger than two.
Thus, it remains only to account for the vertices of degree two.
From the algorithmic construction described above,
we know that these vertices of degree two are created when the algorithms reaches the boundary a triangle in $\Gamma_i$ while tracing the boundary of the new \icell. 
We claim each straight line edge bounding a triangle of $\Gamma_i$ contains at most one vertex of degree two of the envelope-graph of $\s_i$.
From this claim, and since $|\s_i|\leq |\Gamma_i|$, it follows that  $Z_i$ consists of $O(|\Gamma_i|)$ vertices completing our~proof.
Thus, it only remains to prove this last claim.

Before jumping into the proof, we need to observe the following.
Consider an arc $a$ of the envelope-graph bounding the \icell of some site $s\in \s_i$.
If $x$ is a point in the interior of~$a$, then by Corollary~\ref{corollary: Shadows in cell as well}, 
the open segment $x x^*$ crosses no other arc of the envelope-graph, where $x^*$ is the $s$-shadow of $x$.

To prove that each edge bounding a triangle of $\Gamma_i$ contains at most one vertex of degree two, we proceed as follows.
Assume for a contradiction that two vertices $u$ and $v$ of degree two lie on a single edge $e$ bounding a triangle $\triangle\in \Gamma_i$.
Assume that $e$ is vertical and that $\triangle$ belongs to the description of $f_C$ for some $r\in R$. 
The case where $\triangle$ is a refined triangle is analogous. Assume without loss of generality that $u$ lies in $\p{s}{v}$.
Note that each of $u$ and $v$ have two arcs of the envelope-graph incident to them.
Regardless of the case, we can assume without loss of generality that two of them go to the right of $e$. 
That is, there must be a point $x$ in the interior of one of them such that the segment $x x^*$ intersects the other arc---a contradiction with our observation above.
Therefore, each edge bounding a triangle of $\Gamma_i$ contains at most one vertex of degree two proving our claim.
\end{proof}

We are now ready to provide the proof of our main results by combining the lemmas presented in this section. 

\begin{theorem}
The envelope-graph and refined envelope of $\s_{|R|}$ can be computed in expected $O(n)$ time.
Moreover, for each $s\in \s$, the envelope-graph of $\s_{|R|}$ coincides with the Voronoi tree of $\vd{\s}$.
\end{theorem}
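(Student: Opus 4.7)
The plan is to prove the theorem in two independent parts: a correctness claim that the envelope-graph of $\s_{|R|}$ coincides with the Voronoi tree of $\vd{\s}$, and an expected running-time bound. The tools developed in Sections~\ref{sec: Preprocessing of red sites} and~\ref{section: Insertion process}, together with Lemmas~\ref{lemma:Patch contains vcell},~\ref{lemma:Time to insert one site of R}, and~\ref{lemma: Complexity of envelope-graph}, are essentially designed to make this separation clean.

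For correctness I would apply Lemma~\ref{lemma:Patch contains vcell} with $i = |R|$, so that $\s_i = \s$, to obtain $\cell{s,\s} \subseteq$ \icell[|R|] of $s$ for every $s \in \s$. Since the Voronoi cells of $\s$ partition $P$ and the \icells[|R|] form a connected partition of $Q \supseteq P$ into $|\s|$ non-empty pieces, a pigeonhole argument forces equality inside $P$: if an interior point $x \in P$ of the \icell[|R|] of some site $s$ lay in $\cell{s',\s}$ for $s' \neq s$, applying Lemma~\ref{lemma:Patch contains vcell} to $s'$ would force $f_s(x) = f_{s'}(x) = \max_{s''} f_{s''}(x)$, which would place $x$ on the envelope-graph and not in the interior of a patch. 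Combined with Observation~\ref{obs: f_s and distance coincide in cell} and Lemma~\ref{lemma: f_r and distance coincide in cell}, which guarantee that $f_s$ coincides with $\dd{s}{\cdot}$ on $\cell{s,\s}$ while only underestimating it elsewhere on $P$, this shows that the \icells[|R|] restricted to $P$ are exactly the Voronoi cells of $\s$, so the curves separating them trace out precisely the Voronoi tree of $\vd{\s}$.

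For the running time I would sum the per-insertion cost of Lemma~\ref{lemma:Time to insert one site of R} over all $r \in R$, obtaining an expected total of $O\bigl(\sum_r M_r + \sum_r D_r + \sum_r |\mu_r|\bigr)$. The third sum is $O(n)$ in expectation by the bound $\ex{\sum_r |G_r|} = O(n)$ of Lemma~\ref{lemma:SPMForSitesR}. The second sum is controlled by the standard observation that every arc of the envelope-graph is created at most once and destroyed at most once, so $\sum_r D_r \le Z_0 + \sum_r M_r$, where $Z_0 = O(|Q|) = O(n)$ is the complexity of the recursively computed $\vd[Q]{B}$ (this is Lemma~\ref{lemma: Complexity of envelope-graph} at $i=0$).

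The principal obstacle, and the step requiring the most care, is to show that $\ex{\sum_r M_r} = O(n)$. A direct backward analysis at step $i$ gives that the newly inserted red site is uniform among the $i$ red sites in $\s_i$, so $\ex{M_{r_i} \mid \s_i} \le Z_i/i$; combined with $\ex{Z_i} = O(n)$ this only telescopes to $O(n \log m)$. To remove the spurious logarithmic factor I plan to exploit the extra randomness in the initial partition $\s = B \cup R$: since $B$ is a uniformly random subset of $\s$ of size $\Theta(m)$ and $R$ is inserted in uniformly random order, the set $\s_i$ is a uniformly random subset of $\s$ of size $|B|+i = \Theta(m)$, and a symmetrized backward analysis that averages over all sites in $\s_i$ rather than only those in $R$ replaces the denominator $i$ by $|B|+i = \Theta(m)$. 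This gives the per-insertion bound of $O(n/m)$ promised in the introduction and hence $\ex{\sum_r M_r} \le |R| \cdot O(n/m) = O(n)$. Combining the three terms produces the desired $O(n)$ expected running time, which together with the correctness argument proves the theorem.
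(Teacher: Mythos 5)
Your proposal is correct and follows essentially the same route as the paper: containment from Lemma~\ref{lemma:Patch contains vcell} plus the fact that the Voronoi cells cover $P$ gives coincidence of the patches with the cells inside $P$, and the running time is bounded exactly as you describe. In particular, the ``symmetrized backward analysis'' you identify as the principal obstacle is precisely what the paper does --- it defines $X_s$ for \emph{every} $s\in\s_i$ with $Pr[X_s=1]=1/|\s_i|$, which is justified by your observation that $\s_i$ is a uniformly random prefix of the permutation $\Pi$, so the denominator is $|\s_i|=\Theta(m)$ rather than $i$.
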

\begin{proof}
By Lemma~\ref{lemma:Patch contains vcell}, the \icell[|R|] of each site $s$ of $\s$ contains its corresponding Voronoi cell $\cell{s, \s}$. 
Because the union of these Voronoi cells covers $P$, we conclude that the \icell[|R|] of $s$ and $\cell{s, \s}$ coincide inside $P$ for each $s\in \s$.
That is, the envelope-graph of $\s_{|R|}$ coincides with the Voronoi tree of $\vd{\s}$.

Let $r$ be the $i$-th site of $R$ inserted in our randomized incremental construction. 
Notice that by Lemma~\ref{lemma:Time to insert one site of R}, 
the time needed to insert $r$ is $O(M_r + D_r + |\mu_r|)$, where $M_r$ is the size of the \icell of $r$, $D_r$ is the number of arcs of the envelope-graph of $\s_{i-1}$ that disappear, and $\mu_r$ is the set of triangles defining $f_r$. 
Thus, the total running time of our incremental construction is 
\[ O\left(\sum_{r\in R} (M_r + D_r + |\mu_r|) \right ) = O\left( \sum_{r\in R} M_r + \sum_{r\in R} D_r + \sum_{r\in R} |\mu_r|\right). \]

We claim that $\ex{M_r} = O(n/m)$, where $m = |\s|$.
If this claim is true, then by Lemma~\ref{obs: Complexity of B and R} and linearity of expectation, 
\[\ex{\sum_{r\in R} M_r} = O\left(\frac{|R| n}{m}\right) =  O(n).\]
Moreover, we know that $\ex{\sum_{r\in R} |\mu_r|} = O(n)$ by Lemma~\ref{lemma:SPMForSitesR} and the definition of $\mu_r$.
Finally, since an arc can only be created once, then the total number of arcs that disappear is at most the number of arcs created, which is the number of arcs in $\vd[Q]{B}$, plus the arcs accounted in each $M_r$. Thus, by our previous arguments $\ex{\sum_{r\in R} D_r}  \leq  O(n) + \ex{\sum_{r\in R} |\mu_r|} = O(n)$.
Consequently, if our claim about the expected value of $M_r$ is true, then the expected total running time to insert $|R| = O(m)$ sites, and compute the envelope-graph of $\s_{|R|}$ is 
\[O\left( \ex{\sum_{r\in R} M_r + \sum_{r\in R} D_r + \sum_{r\in R} |\mu_r|}\right) = O(n).\]

Thus, it only remains to bound the expected size of the \icell of $r$, i.e., to show that $\ex{M_r} = O(n/m)$.
That is, we can bound its expected size independent of $i$.
For a site $s\in \s_i$, let $X_s$ be an i.r.v. that is one if and only if $s = r$ such that $Pr[X_s = 1] = \frac{1}{|\s_i|}$.
Additionally, let $M_s$ be the \emph{size} of the \icell of $s$, i.e., the number of arcs and vertices that define its boundary. 
Using this notation,
\[M_r = \sum_{s\in \s_i} ( X_s M_s), \text{ and by taking expectation, } \ex{M_r} = \sum_{s\in \s_i} \ex{X_s M_s}.\]
Note that $X_s$ and $M_s$ are independent by definition of $X_s$.
Thus, we get that  
\[\ex{M_r} = \sum_{s\in \s_i} \ex{X_s} \ex{M_s} = \frac{1}{|\s_i|} \ex{\sum_{s\in \s_i} M_s}.\]
Note that $\sum_{s\in \s_i} M_s$ coincides with the complexity of the envelope-graph of $\s_i$.
Thus, because Lemma~\ref{lemma: Complexity of envelope-graph} implies that $\ex{\sum_{s\in \s_i} M_s} = O(n)$, we get that  $\ex{M_r} = O(n/|\s_i|)$.
Because $\alpha m \leq |B| \leq |\s_i| \leq m$ by Lemma~\ref{obs: Complexity of B and R}, we conclude that $\ex{M_r} = O(n/m)$ as claimed.
\end{proof}

Putting everything together, we obtain the following result.

\begin{theorem}
Let  $P$ be a simple polygon and let let $\s$ be a set of $m\geq 3$ weighted sites contained in $V(P)$.
We can compute the FVD of $\s$ in $P$ in $O(n)$ time. 
\end{theorem}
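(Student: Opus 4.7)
The plan is to show that the full algorithm, obtained by composing the simplification transform, the random partition, the recursive call on the reduced polygon $Q$, and the incremental insertion of the red sites, runs in expected $O(n)$ time, and that it outputs a Voronoi graph that coincides with $\vd{\s}$. The correctness part is essentially already packaged: the preceding theorem establishes that the envelope-graph of $\s_{|R|}$ obtained by the incremental construction coincides with the Voronoi tree of $\vd{\s}$, and Lemma~\ref{lemma:Properties of simplification} guarantees that the simplification transform preserves this tree. So the main content is the running-time accounting.

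First I would apply the simplification transform of Section~\ref{section:Simplification}, which by Lemma~\ref{lemma:Properties of simplification} takes $O(n)$ time and produces a (weakly) simple polygon of size at most $n+m$ whose convex vertices are exactly $\s\cup \LL$. After that, I compute $\bvd{\s}$ and all funnels $\funnel{s, \s}$ in $O(n)$ time via Theorem~\ref{thm:VD in boundary} and Lemma~\ref{lemma:Complexity of funnels}, draw the random permutation $\Pi$, and form $B$ and $R$ as in Section~\ref{section:First phase}. By Observation~\ref{obs: Complexity of B and R}, $\ex{\cost{B}} \leq \alpha n$ and $|B|\leq \alpha m$; by re-drawing a constant expected number of times, I can assume both bounds hold pointwise. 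Then I build $Q$ in expected $O(n)$ time using Lemma~\ref{lemma:PropertiesOfQ}, so $|Q|\leq \alpha n + 4m$, and recursively compute the refined FVD $\vd[Q]{B}$. Next, I perform the preprocessing of Section~\ref{sec: Preprocessing of red sites}, computing each polygon $G_r$ and the SPM of $r$ inside $G_r$ in total expected $O(n)$ time by Lemma~\ref{lemma:SPMForSitesR}. Finally, I run the incremental insertion of Section~\ref{section: Insertion process}; by the previous theorem, this produces the envelope-graph of $\s_{|R|}$ in expected $O(n)$ time, and that graph coincides with the Voronoi tree of $\vd{\s}$.

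The crucial step is the running-time recurrence. Writing $T(n,m)$ for the expected time on an instance with $n$-vertex polygon and $m$ sites, the work outside the recursive call is the $O(n)$ preprocessing plus the expected $O(n)$ incremental construction, i.e.\ $O(n+m) = O(n)$ since $m\leq n$. Thus
\[
T(n,m) \leq T(\alpha n + 4m,\; \alpha m) + c(n+m)
\]
for some constant $c>0$. The obstacle is that $\alpha n + 4m$ need not be substantially smaller than $n$ when $m$ is comparable to $n$, so a single unfolding of the recurrence does not suffice. I would resolve this by unrolling one additional level: since the second site set has size at most $\alpha m$, its associated polygon has size at most $\alpha(\alpha n + 4m) + 4\alpha m = \alpha^2 n + 8\alpha m$, and so
\[
T(n,m) \leq T(\alpha^2 n + 8\alpha m,\; \alpha^2 m) + c(n+m) + c\bigl((\alpha n + 4m)+\alpha m\bigr).
\]
Choosing $\alpha$ small enough (for instance $\alpha \leq 1/32$) and using $m\leq n$, the new polygon size is at most $n/2$ and the new site count is at most $m/2$, so $T(n,m) \leq T(n/2, m/2) + O(n)$, which solves to $T(n,m)=O(n)$.

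Putting these pieces together, the total expected time is $O(n)$, and the resulting envelope-graph is the Voronoi tree of $\vd{\s}$ (with the refined envelope giving the cell structure), proving the theorem. The main nontrivial technical obligation is the two-level unfolding of the recurrence, since without it the straightforward substitution fails when $m$ is $\Theta(n)$; every other component has already been done in linear expected time by the preceding lemmas.
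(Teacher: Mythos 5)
Your proposal is correct and follows essentially the same route as the paper: the paper also proves this theorem by composing the simplification transform, the random partition, the recursive call on $Q$, and the insertion theorem, and it resolves the running time via exactly the same two-level unrolling $T(n,m) \leq T(\alpha^2 n + 8\alpha m, \alpha^2 m) + O(n+m) + O(\alpha n + 4m + \alpha m) \leq T(n/2,m/2) + O(n+m)$ for sufficiently small $\alpha$. Your identification of the two-level unfolding as the one nontrivial step (since $\alpha n + 4m$ need not be smaller than $n$ when $m = \Theta(n)$) matches the paper's own emphasis.
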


{\small
\bibliographystyle{abbrvnat}
\bibliography{Geodesic}}

%\newpage 
%\appendix
%\section{Missing proofs.}
%\setcounter{theorem}{5}
%\ProofApexFVD
%\setcounter{theorem}{9}
%\ProofPropertiesOfQ
%\setcounter{theorem}{11}
%\ProofShadowPoints
%\ProofSeparatingCurveShadow
%\setcounter{theorem}{14}
%\ProofBisectorFunctions
%\setcounter{theorem}{16}
%\ProofSpmSupersetOfCell
%\ProofTimeInsertRedSite
%\ProofComplexityEnvelope

\end{document}